\documentclass[10pt,aps,prb,twocolumn,superscriptaddress]{revtex4-2}

\usepackage[utf8]{inputenc}
\usepackage[english]{babel}
\usepackage{mathtools}
\usepackage{physics}
\usepackage{braket}
\usepackage{bbm}
\usepackage{bm}
\usepackage{amsbsy}
\usepackage{amsthm}
\usepackage{amssymb}
\usepackage{amsfonts}
\usepackage{amsmath}
\usepackage{dsfont} 
\usepackage{graphicx, caption} 
\usepackage{subcaption}
\usepackage{epsfig}
\usepackage{epstopdf}
\usepackage{dsfont}
\usepackage{multibib}
\usepackage{xcolor}
\usepackage{color}
\usepackage{verbatim}
\usepackage{nomencl}
\usepackage[colorlinks]{hyperref}
\usepackage{float}
\usepackage[normalem]{ulem}
\usepackage{comment}

\newtheorem{prop}{Proposition}[section]

\begin{document}

    \title{Optimal observables for quantum-enhanced sensing and applications in a Floquet time crystal sensor}

    \author{M. A. Manya}\thanks{These authors contributed equally to this work.}
	\affiliation{Instituto de F\'isica, Universidade Federal Fluminense, Av. Gal. Milton Tavares de Souza s/n, Gragoat\'a, 24210-346 Niter\'oi, Rio de Janeiro, Brazil}

    \author{Andrei Tsypilnikov}\thanks{These authors contributed equally to this work.}
	\affiliation{Instituto de F\'isica, Universidade Federal Fluminense, Av. Gal. Milton Tavares de Souza s/n, Gragoat\'a, 24210-346 Niter\'oi, Rio de Janeiro, Brazil}

    \author{Fernando Iemini }
	\affiliation{Instituto de F\'isica, Universidade Federal Fluminense, Av. Gal. Milton Tavares de Souza s/n, Gragoat\'a, 24210-346 Niter\'oi, Rio de Janeiro, Brazil}

    \begin{abstract}
        In this work, we discuss how to determine and implement  feasible optimal observables for a metrology protocol that saturates the quantum Fisher information (QFI) bound. In particular, we focus our study on a simple protocol, namely the method of moments (MoM). We first demonstrate that the symmetric logarithmic derivative (SLD) operator, a Hermitian observable, once implemented in the MoM, saturates the QFI bound. However, the SLD is generally too complex and typically non-local, rendering its direct experimental realization unfeasible. To overcome this limitation, we explore its structure in a specific sensing model - a Floquet time crystal (FTC) acting as an ac field sensor - and show that the SLD can be approximated by substantially simpler observables, such as the bare spin magnetization or a parity observable, for different relevant initial state preparations. We further corroborate our theoretical predictions in a nuclear magnetic resonance system operating as an FTC sensor, employing experimentally motivated parameters to simulate its performance in a state-of-the-art implementation. In general, our results establish a practical route toward near-optimal metrology in FTC sensors, where the inaccessible SLD operator can be replaced by simpler observables while retaining quantum-enhanced sensitivity.
    \end{abstract}

	\pacs{}
	\maketitle

\section{Introduction}
\label{sec.introduction}

Quantum metrology plays a crucial role in the rapid development of quantum technologies and modern science~\cite{Toth_2014}. It harnesses the laws of quantum mechanics to achieve unprecedented precision and sensitivity in unknown parameter estimation~\cite{MONTENEGRO20251,Pirandola2018}. Fundamentally, it drives the creation of quantum sensors~\cite{RevModPhys.89.035002} capable of surpassing the limits of classical physics.

The primary goal of quantum metrology  is the study of accurate estimation of an unknown parameter $h$ encoded in the quantum state of a physical system. The parameter estimation, fundamentally, is limited by the quantum Cramér–Rao bound (QCRB), which establishes a lower bound on the variance of any unbiased estimator, through the quantum Fisher information (QFI)~\cite{npj_Yu2022}. However, a significant gap often exists between this theoretical limit and experimentally achievable precision. Reaching the QCRB requires measuring, in the metrology protocol, an optimal observable whose decoded parameter $h$ has an estimated uncertainty saturating the QFI bound. The determination of such an optimal observable, however, is not usually trivial.
Moreover, even if determined, implementing measurements corresponding to such an optimal observable may still be challenging, as it may lack a direct or experimentally accessible physical realization. Consequently, a central objective in practical quantum metrology is also to search for ways to approximate the theoretical precision limits within experimentally feasible measurement strategies.

Beyond equilibrium systems, non-equilibrium quantum platforms have recently emerged as promising candidates for enhanced quantum sensing and metrology \cite{MONTENEGRO20251}. In particular, Floquet time crystals (FTC) have proven to be a promising avenue for enhanced metrology, with different works showing their improved performance both in closed and dissipative systems~\cite{Andrei2026, PhysRevA.109.L050203, Andr10.21468/SciPostPhys.18.3.100, Andr7m63-lnb8, AndrArumugam2025, Andrbiswas2025floquetcentralspinmodel, Andrcabot2026parameterestimationonetwotime, AndrMontenegro2023, Andrmoon2024discretetimecrystalsensing, AndrPavlov_2023, AndrPhysRevB.111.024315, AndrPhysRevB.111.125159, AndrPhysRevLett.132.050801, AndrRondin_2014, Andrzybb-vxfz,yousefjaniNonlinearityenhanced2026}.
These novel phases of matter were first predicted theoretically~\cite{colloquium, PhysRevLett.118.030401} and subsequently observed experimentally~\cite{Zhang2017, Choi2017}. These systems can be studied within the framework of many-body physics, where quantum entanglement can be controlled and exploited to model atomic ensembles as quantum sensors, including nuclear magnetic resonance (NMR) platforms~\cite{pal2018temporal, Uhlig2019} and nitrogen-vacancy (NV) centers in diamond~\cite{PhysRevB.97.184301}. Importantly, the relevance of FTC to quantum metrology lies in their intrinsic stability and robustness against decoherence, arising from their double-period subharmonic response by the disorder and external perturbations. This noise resilience enables long-lived coherent dynamics, allowing measurements to be performed over extended interrogation times with improved precision.

This research addresses the central challenge of bridging the gap between the theoretical precision limits imposed by the QCRB and their practical realization in metrology protocols. Specifically, we ask: What is the optimal observable for estimating an unknown parameter $h$, both from a theoretical standpoint as well as in terms of experimental feasibility? To tackle this question, we consider a simple protocol, the method of moments (MoM), and show that the symmetric logarithmic derivative (SLD) operator~\cite{helstrom1969quantum} saturates the theoretical QFI bound. However, the SLD is generally complex and non-local, making its direct implementation challenging. To overcome this limitation, we explore its structure in FTC-based ac field sensors, and demonstrate how the SLD simplifies to experimentally feasible observables, such as bare spin magnetization or parity, for different initial state preparations.

We further explore our theoretical predictions in a state-of-the-art  FTC-NMR sensor. NMR platforms are powerful, well-established systems that have become increasingly relevant for quantum metrology, particularly for the study of complex spin dynamics and high-precision parameter estimation in ensemble-based systems~\cite{JONES201191}. In such quantum sensing schemes, sensitivity is governed by the collective behavior of nuclear spins, with NMR measurements providing access to unknown magnetic field parameters encoded in the spin precession frequency~\cite{Giovannetti2011}. To capture the essential features of sensor dynamics and spin interactions, we employ the star topology model~\cite{Mahesh_2021, PandePhysRevA.96.012330}, in which a central spin interacts symmetrically and collectively with an ensemble of auxiliary spins.

This manuscript is organized as follows: In the Sec.~\ref{sec:metrology}, we present the theoretical background of metrology, specifically QFI and MoM. In Sec.~\ref{sec:sld-operator-(andrei)} we present a theoretical development of the (SLD) optimal observable for the MoM. In the Sec.~\ref{sec:SLD in Floquet Time Crystal} an analytical expansion of SLD operator in a FTC phase is presented. In Sec.~\ref{sec:NMR platform} we present our NMR sensor model and a numerical analysis of its QFI and MoM. Lastly, we conclude in Sec.~\ref{sec.conclusions}.

\section{Metrology: QFI and Method of Moments}
\label{sec:metrology}

From a general metrological perspective, applicable to both classical and quantum regimes, any parameter estimation protocol consists of four essential steps:
(i) preparation of a quantum probe, (ii) dynamical encoding of the unknown parameter, (iii) information extraction via measurement, and (iv) classical inference to process the unknown parameter.

Classically, the parameter estimation is fundamentally limited by statistical fluctuations arising from independent measurement events. When the probe consists of uncorrelated resources such as photons~\cite{Muiretal2012}, atoms, or spins, the measurement statistics follow a Poissonian distribution, leading to the so-called shot-noise limit (SNL). Under these conditions, the uncertainty in the estimation of an unknown parameter scales with the number of resources, in general described by the number of spins $N$ and the interrogation time $t$, as $ 1/\sqrt{Nt}$. The SNL represents the ultimate precision bound achievable within classical physics and has been extensively verified in optical~\cite{Hobbs:90} and atomic~\cite{Rocco_2014} interferometric experiments~\cite{helstrom1969quantum}. In the quantum frame, metrology exploits quantum coherence and entanglement to achieve measurement precision beyond the classical limit~\cite{PhysRevLett.59.278, Toth_2014}. In particular, quantum mechanics enables optimal precision with uncorrelated quantum probes, called the standard quantum limit (SQL)~\cite{science_vitorio2004}, which scales as $1/\sqrt{Nt^2}$, while correlated quantum probes surpass the SQL and approach the ultimate Heisenberg limit~\cite{prl_liu2025}, characterized by a $1/Nt$ scaling.

\subsection{Quantum Fisher Information}

The QFI is a fundamental physical quantity that quantifies the sensitivity of a quantum state to changes in some physical parameter in the dynamics, such as the magnetic field or temperature.  In metrology,
the QFI quantifies the maximum achievable sensitivity of
a quantum sensor, whose  precision is bounded by the QCRB limit:
\begin{equation}
\label{eq:01}
\mathrm{var}(h) \ge \frac{1}{F_h}.
\end{equation}
Here, $F_h$ is the QFI and  $\mathrm{var}(h) = \langle(h_{\rm est} - h )^2\rangle $ the estimation uncertainty,
where $h_{\rm est}$ is the estimated parameter using the metrology protocol within an ``estimator method'' (likelihood, method of moments, among others) and $h$ is its exact value.
 There are equivalent forms to represent the QFI which will be relevant for our analysis, as we discuss. First, the  QFI can be expressed through the SLD operator $\hat{L}_h$, which is an operator satisfying the following relation,
 \begin{equation}
        \label{eq:03}
        \frac{\partial \hat{\rho}_h}{\partial h}
        = \frac{1}{2}\left( \hat{\rho}_h \hat{L}_h + \hat L_h \hat\rho_h \right),
    \end{equation}
with  $\hat{\rho}_h$ the quantum state with encoded $h$ parameter. The QFI in terms of the SLD~\cite{helstrom1969quantum} is given by, 
\begin{equation}
   \label{eq.QFI.SLD}
    F_{h}[\hat{\rho}_h] = \mathrm{Tr}\left[\hat{\rho}_h \hat{L}_h^{2}\right].
\end{equation}

Alternatively, for systems in a pure state, the QFI reduces to a more intuitive expression,
\begin{equation}
    \label{eq.QFI.purestates}
    F_h(t) = 4\left( \langle \partial_h\psi_h(t)|\partial_h\psi_h(t)\rangle  - |\langle \partial_h\psi_h(t)|\psi_h(t)\rangle|^2 \right),
\end{equation}
where  $\partial_h |\psi_h(t)\rangle = \frac{\partial}{\partial h} |\psi_h(t)\rangle $. By expanding this expression and working in the Heisenberg picture, the QFI can alternatively be rewritten in terms of the variance of an evolved operator, the so called Heisenberg signal operator (HSO), $\hat S_h(t)$, as follows:
\begin{equation}
    \label{eq.QFI.HSO.purestates}
    F_h(t) = 4 \mathrm{var}(\hat{S}_h(t)),
\end{equation}
with $\mathrm{var}(\hat O) = \langle \psi_h(0)|\hat O^{2}|\psi_h(0)\rangle -
    \langle \psi_h(0)|\hat O|\psi_h(0)\rangle^2$.
The definition of the HSO operator comes from the Duhamel formula~\cite{Dyson1949, PhysRevA.109.L050203}
    \begin{equation}
        \hat{S}_h(t) \equiv \int_0^t \hat{U}_h^\dagger(t')\left( \frac{\partial \hat{H}(t')}{\partial h} \right)\hat{U}_h(t') dt',
        \label{eq:HSO}
    \end{equation}
which satisfy the differential equation of the parameter derivative of the evolution operator,
   \begin{equation}
        \frac{\partial \hat U_h (t) }{\partial h} = -i \hat U_h (t) \hat{S}_h(t).
        \label{eq.qfi_def}
    \end{equation}

Within these three different expressions of the QFI, one can clearly see how it characterizes the system's response through quantum state changes, as shown in Eq.~\eqref{eq.QFI.purestates}, and how it can be understood as a statistical fluctuation inherent to the variance of the HSO Eq.~\eqref{eq.QFI.HSO.purestates}. In both instances, these changes are driven by variations in the external field $h$.

Although QFI establishes the maximum theoretical precision for estimating unknown parameters, this limit may be   unattainable in practice and different estimation strategies are employed in order to circumvent this issue. A simple strategy is based on the method of moments, which is based on the dynamics of the average expectation value of an (possibly optimal) observable, as we discuss below.


\subsection{Method of moments}
\label{sec.mom}
In a practical scenario, the connection between the quantum state and the experimental outcomes is formally established through positive operator-valued measure (POVM), defined by a set of operators $\{ \hat\Pi_m \}$ that satisfy $\sum_m\hat{\Pi}_m = \mathbb{I}$.  Within this framework, the average value $\bar{m}_k$ of $k$ measurement results $\{ m_1, m_2 ... m_k\}$ is linked to the expectation value $\mu_k(h)$ of the corresponding observable $\hat{O}$. Specifically, $\mu(h) = \text{Tr}[\hat \rho_h \hat{O}] = \sum_m m\, P(m|h)$, where $P(m|h) = \text{Tr}[\hat \rho \hat \Pi_m]$ is the probability of obtaining result $m$ given the POVM.  The MoM assumes that for a large number of samples, the sample average $\bar {m} _ k = \frac{1}{k}\sum^k_i m_i$ converges to the theoretical expectation value $\mu(h)$. Consequently, the estimator $h$ is derived by solving $\mu_k(h) = \bar m_k$. By applying a first-order Taylor expansion~\cite{LucaPezzeRevModPhys.90.035005} on $\mu_k(h)$ around the true value, the estimation uncertainty can be expressed $\text{var}(h) = 1/\text{MoM}(h, t)$ with  the estimator,
    \begin{equation}
		\label{eq:mom1}
		\text{MoM}(h, t) = \frac{ \left( \partial_h \langle \hat{O}
		\rangle_h \right)^2 }{\text{Var}(\hat {O})_h} = \frac{\left(
		\partial_h \langle \hat{O} \rangle_h \right)^2}{\langle
	\hat{O}^2 \rangle_h - \langle \hat{O} \rangle^2_h}.
	\end{equation}

The observable $\hat{O}$ is, in principle, arbitrary;
 however its choice is crucial, as it must be capable of maximally probing the fluctuations induced by the external field $h$. These fluctuations are captured by the term $|\partial_h \langle\hat{O}\rangle_h|^2$, whereas $\text{var}(\hat{O})_h$ represents the intrinsic quantum noise of the system. In other words, the MoM can be viewed as a signal-to-noise ratio where the signal identifies the regime of highest sensitivity, and the noise constrains the ultimate precision of parameter estimation.

 We further notice that, as the maximization  of the MoM requires optimizing an observable $\hat O$, it could be thought in terms of the SLD operator, which is directly related to the QFI through Eq.~\eqref{eq.QFI.SLD}. We explore this connection in the following section.
  It is important to note, however, that although the norm of the SLD generally grows with both time and system size, the observable associated with the MoM does not necessarily need to obey the same scaling behavior. This stems from the fact that the MoM is invariant under a global rescaling of the observable operator, $\hat O \to c\,\hat O$, for any $c \neq 0$. Consequently, the relevant quantities are not the absolute magnitude of the operator, but rather its unitary transformation and susceptibility across the Hilbert space.

    \section{(SLD) Optimal observable for method of moments}
    \label{sec:sld-operator-(andrei)}
    \label{andrei}

    In this section, we analyze, in theory, protocols that maximize the precision of the MoM. First, we show that the MoM can always achieve its maximum precision (saturating the QFI bound) by employing the SLD operator as its observable. Next, we discuss a general protocol and highlight restrictions from an experimental point of view that could prevent its direct use, as well as possible forms to overcome it. We illustrate here its use in a simple single spin Floquet dynamics, and leave the discussion of the more intricate case of a FTC phase to the next section.

    \subsection{Saturation of MoM}

    To demonstrate that the SLD operator defined by Eq.~\eqref{eq:03} saturates the QFI bound for pure states within the  MoM, it is helpful to write it in the following form~\cite{paris_quantum_2009, liu_quantum_2020},
    \begin{equation}
        \label{eq:SLD.pure.states}
        \hat{L}_h = 2 \frac{\partial \rho_h}{\partial h} = 2\Big(\ket{\partial_h \psi_h} \bra{\psi_h} + \ket{\psi_h} \bra{\partial_h \psi_h}\Big)
    \end{equation}
    One can now introduce a new operator $\hat{L}_{h_0}(t) \equiv \hat{L}_{h}(t)\big|_{h=h_0}$, which is the SLD operator computed for a fixed value $h_0$.  Two properties make this the natural observable for the MoM: unlike $\hat{L}_h$, it can be constructed without prior knowledge of the unknown $h$; being independent of $h$, its derivative $\partial_h\hat{L}_{h_0}=0$, so the sensitivity of the expectation value $\partial_h\langle\hat{L}_{h_0}\rangle_h$ is sourced entirely by the state $\hat\rho_h$. In that case Eq.~\eqref{eq:mom1} gives for $\hat O \equiv {\hat L}_{h_0}$
    \begin{equation}
        \label{eq:MomSLD}
        \text{MoM}(h,t) = \frac{
            \Tr^2 {\left(  {\hat L}_{h_0} \partial_h \rho_h \right)}}{\langle
        {\hat L}_{h_0}^2 \rangle - \langle {\hat L}_{h_0} \rangle^2}
    \end{equation}
 where we used that $\partial_h \langle {\hat L}_{h_0} \rangle = \partial_h \left(\mathrm{Tr}( \hat{\rho}_h {\hat L}_{h_0})\right) = \mathrm{Tr}({\hat L}_{h_0}\, \partial_h \hat{\rho}_h)$, since ${\hat L}_{h_0}$ is independent of $h$.

 From the definition of Eq.~\eqref{eq:03}, if the observable was chosen as the SLD evaluated at the true (unknown) value $h_0 \to h$ one would obtain  that
 	\begin{equation}
 		\begin{aligned}
 		\lim_{h_0 \rightarrow h} \Tr  \left(  {\hat L}_{h_0} \partial_h \rho_h \right)
 		&= \lim_{h_0 \rightarrow h}
         \Tr  \left(  {\hat L}_{h_0} \frac{1}{2}\left( \hat{\rho}_h \hat{L}_h + \hat L_h \hat\rho_h \right) \right) \\
 		&= \lim_{h_0 \rightarrow h} \Tr  \left(  \frac{1}{2}({\hat L}_{h_0} \hat L_h
         + {\hat L}_h \hat L_{h_0})\hat\rho_h \right) \\
 		&= \langle \hat L^2_{h} \rangle.
 		\end{aligned}
 	\end{equation}
 	Further noticing that
    \begin{eqnarray}
    \langle \hat{L}_{h} \rangle &=&    2\langle \psi_h | \Big(\ket{\partial_h \psi_h} \bra{\psi_h} + \ket{\psi_h} \bra{\partial_h \psi_h}\Big) | \psi_h\rangle   \nonumber \\
    &=&
    2 \partial_h \Big( \langle \psi_h|\psi_h\rangle \Big) \nonumber \\
    &=& 0,
    \end{eqnarray}
    since the state norm is preserved,
    we see that Eq.~\eqref{eq:MomSLD} saturates the QFI. Precisely, given
    \begin{equation}
        \label{eq:11}
        \hat O  \equiv \hat L_{h_0}: \quad \lim_{   h_0 \rightarrow h  } \text{MoM}(h,t) = \langle \hat L^2_{h} \rangle = F_{h}[\hat{\rho}_h].
    \end{equation}

\subsection{SLD-based MoM protocol}

    In order to explicitly write the estimator
    $\hat{h}_{\rm MoM}$ and the protocol for the MoM, using the SLD as its observable, we can first perform a Taylor expansion of \(\langle\hat{L}_{h_0}\rangle\) in $h$ around \(h_0\). Noting that according to definition in Eq.~\eqref{eq:03} at $h_0$ point we have that
    \begin{equation}
        \langle\hat{L}_{h_0}\rangle\big|_{h=h_0}=0,\qquad
\partial_h\langle\hat{L}_{h_0}\rangle\big|_{h=h_0}=\langle\hat{L}_{h_0}^2\rangle,
    \end{equation}
        one obtains that,
    \begin{equation}
        \langle\hat{L}_{h_0}\rangle\approx \langle\hat{L}_{h_0}^2\rangle (h-h_0)\quad\text{for }|h-h_0|\ll h.
    \end{equation}
    Therefore, the  MoM estimator for \(\nu\) independent measurements is given by,
    \begin{equation}
        \label{eq:SLD.estimator}
        \hat{h}_{\rm MoM}=h_0+\frac{\langle\hat{L}_{h_0}\rangle}{\langle\hat{L}_{h_0}^2\rangle}.
    \end{equation}

    As mentioned in Sec.~\ref{sec.mom}, the moments are estimated from the $i$-shot measurement outcomes \(m_i\) obtained by the POVM of the SLD operator,
	\begin{equation}
    	\langle\hat{L}_{h_0}\rangle\approx\frac1k \sum_{i=1}^k m_i,\quad
    	\langle\hat{L}_{h_0}^2\rangle\approx\frac1k \sum_{i=1}^k m_i^2.
	\end{equation}
	Therefore, from the above discussion the protocol can be defined as follows:

    (i)~Choose an initial value \(h_0\) (e.g., from a prior rough measurement, theoretical prediction, or calibration run).

    (ii)~Prepare the system in the initial state \(|\psi_0\rangle\) and evolve it for an interrogation time \(t^*\), typically determined by the system parameters, that maximizes the QFI.

    (iii)~At time \(t^*\), construct the operator \(\hat{L}_{h_0}\) from the pure-state SLD formula (Eq.~\eqref{eq:SLD.pure.states}) and perform its projective measurement, recording the outcome \(m_i\) for the \(i\)-th shot.

    (iv)~After \(k\) independent shots, apply the MoM estimator of Eq.~\eqref{eq:SLD.estimator} to obtain \(\hat{h}_{\rm MoM}\).

	In general, however, this protocol may be unrealistic for practical laboratory implementations since \(\hat{L}_{h}\) lacks a clear physical interpretation. To first grasp more intuition on it, we analyze in the next subsection a simple single-spin toy model, which shows how SLD operator evolve in terms of Pauli spin observables.

	\subsection{SLD trajectories on the Bloch sphere: single-spin model}

    \begin{figure}
        \centering
        \includegraphics[width=0.49\textwidth]{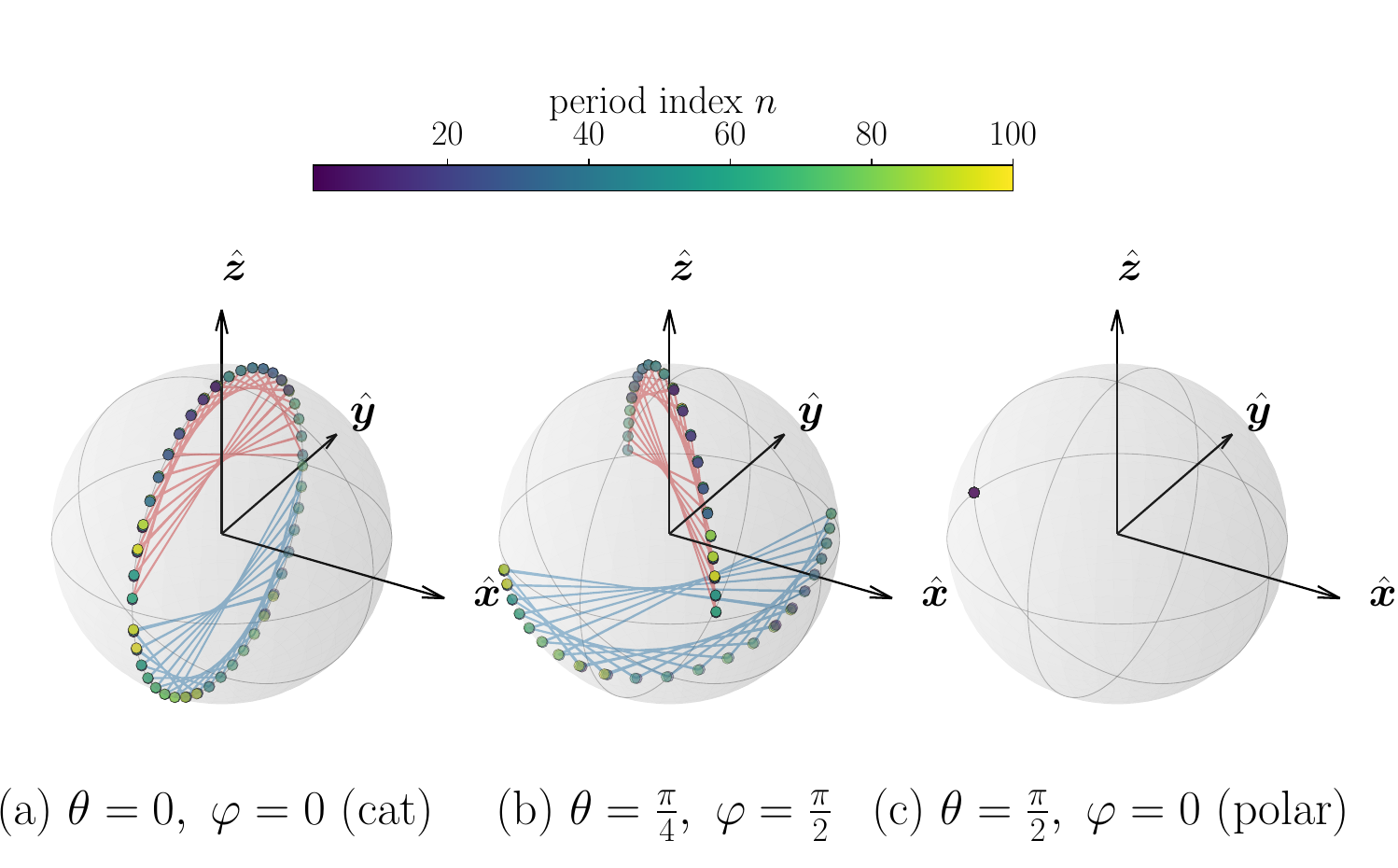}
        \caption{Bloch-sphere trajectories of the SLD vector $\vec{L}_h(nT)/|\vec{L}_h(nT)|$ for the kicked single-spin Hamiltonian---Eq.~\eqref{eq:single.spin.model}---in the linear response regime ($h\rightarrow 0$).
        Markers indicate integer values of the period index $n=1,\dots,100$ (color encodes $n$); thin curves connect even-$n$ (blue) and odd-$n$ (red) points separately to expose the period-doubling structure inherited from the kick.
        We consider three representative initial states, drawn from  $\ket{\psi_0(0)} = \cos(\theta/2)\ket{E_+} + e^{i\varphi}\sin(\theta/2)\ket{E_-}$. (a) Cat-like structure,  $\theta=\varphi=0$: with a  trajectory confined to the $y$--$z$ plane. (b) $\theta=\pi/4$, $\varphi=\pi/2$: full three-dimensional precession. (c) Polarized state $\theta=\pi/2$, $\varphi=0$: with $\hat{L}_h(nT)$ pinned to $-\hat{x}$ direction. In all cases we use the system parameters $\Delta=1$, $T=1$, and $\phi = \pi$.}
        \label{fig:sld-trajectories-lmg}
    \end{figure}

In order to illustrate the SLD operator physically we first consider a very simple model, described by a kicked single-spin system under a constant magnetic field and a transverse driven ac field with amplitude $h$. Specifically,
    \begin{equation}
    \label{eq:single.spin.model}
        \hat H_{ss}(h) = -\frac{\Delta}{2} \hat \sigma^x + h \theta(t) \hat \sigma^z - \frac{\phi}{2} \sum_{n=1}^\infty \delta(t-nT) \hat \sigma^x,
    \end{equation}
    where  $\Delta>0$ is the bare gap, the ac signal is given by a Heaviside step function in period-doubling resonance to the kicking,  i.e.
        $\theta(t+kT) = (-1)^k \theta(t)$ with $|\theta(t)|=1\, \forall t$, and $h$ is the desired parameter to be estimated.
    The eigenvectors of the bare hamiltonian $\hat{H}_{\text{ss}}(0)|_{t=0}=-(\Delta/2)\hat\sigma^x$, written in the computational basis $\hat{\sigma}^z$ are given by\cite{SM},
    \begin{equation}
    	\label{eq:basis.single.spin}
        \ket{E_\pm} = \frac{1}{\sqrt{2}}
        \begin{pmatrix}
            1 \\ \pm1
        \end{pmatrix},  \quad E_{\pm}=\mp\frac{\Delta}{2}.
    \end{equation}
    For such single spin model, one can compute analytically $\ket{\psi_h(t)}$ and $\ket{\partial_h \psi_h(t)}$, and obtain the SLD operator from Eq.~\eqref{eq:SLD.pure.states}. We show in  Fig.~\ref{fig:sld-trajectories-lmg} the resulting trajectories for the normalized SLD operator, defined by its decomposition on the Bloch sphere $\vec{L}_h(nT)\big|_\alpha = \tfrac{1}{2}\Tr[\hat{L}_h\hat{\sigma}^\alpha]$ for $\alpha=x,y,z$. While the general behavior (middle-panel) features a nontrivial dynamics, spreading all over the Bloch sphere, one can see that for certain initial states, the SLD vector remains confined to a plane and aligned with the observable ${\hat \sigma}^z$ at a specific time (left-panel) or aligned with the $\hat \sigma^x$ trajectory for all times (right-panel).

    In the many-body case, however, the dynamics of the SLD operator can become substantially more intricate, making the observable too complex. Specifically,  the SLD operator can evolve into a nonlocal observable, precluding a simple and direct realization in experiments. To mitigate this complexity,  we show below that for certain initial states the SLD operator can still assume a remarkably simple form.

    \section{SLD in Floquet Time Crystals}
    \label{sec:SLD in Floquet Time Crystal}

In this section, we derive the analytical expansion (and further simplifications) for the dynamics of the SLD operator in an FTC phase and discuss its main properties in connection to the MoM. Firstly, we demonstrate a general relation between the SLD operator and the HSO. This relation allows us to transfer many known properties of the HSO to the realm of the SLD operator. We then discuss the explicit structure of the SLD operator, highlighting restrictions that may prevent its use in a metrology MoM protocol from an experimental point of view.
Finally, to mitigate these restrictions, we study the analytical form of the SLD operator for a few relevant initial states, demonstrating that operators such as parity and global magnetization, which are feasible from an experimental point of view, could still be used instead.

    \subsection{Connection between SLD and HSO}
    It is first useful to show how the SLD is generally connected to the HSO of Eq.~\eqref{eq:HSO}. Considering the  unitary evolution $|\psi_h (t) \rangle = \hat U_h (t) |\psi_0 \rangle$,
     and a perturbation theory expansion of $\ket{\psi_h}$ around $h$, for small  $\delta h \ll h$, we have that
    \begin{eqnarray}
        \label{eq.psi.expansion}
        \begin{aligned}
            \ket{\psi_{h+\delta h}(t)} &= \ket{\psi_h^{(0)}(t)} + \delta h  \ket{\psi_h^{(1)}(t)} + O(\delta h^2)
        \end{aligned}
    \end{eqnarray}

    Alternatively, the evolution of the initial state
    $\ket{\psi_0}\equiv\ket{\psi_h^{(0)}(0)}$ with a perturbed dynamics $\hat U_{h+\delta h}(t)$ is given in the Schrödinger picture as,
    \begin{equation}
        \label{eq:taylor.psi.evolution}
        \begin{aligned}
            \ket{\psi_{h+\delta h}(t)} &= \hat{U}_{h+\delta h}(t)\ket{\psi_{h}(0)} \\
            &= \left(\hat{U}_h + \delta h  \frac{\partial \hat{U}_h}{\partial h} \right)\ket{\psi_h^{(0)}(0)} + O(\delta h^2) \\
            &= \hat{U}_h(t)\ket{\psi_h^{(0)}(0)} - i \delta h\hat{U}_h(t)\hat{S}_h(t)\ket{\psi_h^{(0)}(0)} \\
            &+ O(\delta h^2),
        \end{aligned}
    \end{equation}
    where in the second line we used a Taylor expansion of the unitary operator up to first order in $\delta h$, and in the third line we used Eq.~\eqref{eq.qfi_def} for the derivative.

    We thus  observe, comparing Eq.~\eqref{eq:taylor.psi.evolution} with Eq.~\eqref{eq.psi.expansion}, that
    \begin{eqnarray}
    \label{eqs.psi.firstorders}
        \begin{aligned}
            \ket{\psi_h^{(0)}(t)} &= \hat{U}_h(t)\ket{\psi_0}, \\
            \ket{\psi_h^{(1)}(t)} &= -i \hat{U}_h(t)\hat{S}_{h}(t) \ket{\psi_0}.
        \end{aligned}
    \end{eqnarray}

Using now the above relations to the SLD operator allow us to obtain its dependence to the HSO. Specifically, from the perturbation theory expansion (Eq.~\eqref{eq.psi.expansion}) and Eq.~\eqref{eq:SLD.pure.states} one has that,
    \begin{equation}
        \label{eq.SLD}
        \begin{aligned}
            \hat {L}_{h+\delta h} =& 2 (\ket{\psi_h^{(0)}(t)} \bra{\psi_h^{(1)}(t)} + \textrm{h.c.}) +  O(\delta h)
        \end{aligned}
    \end{equation}
    Employing that the first orders corrections are described by Eq.~\eqref{eqs.psi.firstorders}, we obtain the connection between HSO and SLD operators in the limit $\delta h \to 0$,
    \begin{equation}
        \label{SLD:general.form}
        \begin{aligned}
            \hat{L}_{h} (t) &= 2i  \hat{U}_h(t) \Big[\hat \rho_0 \,, \hat{S}_{h}(t) \Big] \hat{U}^{\dagger}_h(t)
        \end{aligned}
    \end{equation}
    where  the square brackets denotes the commutator, and  $\hat{\rho}_0 \equiv \ket{\psi_0}\bra{\psi_0}$ highlights the initial state dependence.

    \subsection{HSO in Floquet time crystals}

    Given the previous connection between SLD and HSO, it shall thus inherit many of its properties and structure. In fact, as we show, the SLD can be analyzed analytically using HSO using two main assumptions. First, the FTC sensor is assumed to be in contact with an ac field $\hat V_h(t) = h~\rm{sign}{\left(f(t)\right)} {\hat S}_z$,  with $f(t)$ a  periodic function which is in period-doubling resonance with the sensor, $f(t + T) = -f(t)$. The sign function is defined as,
    \begin{equation}
        \mathrm{sign}(x)=
            \begin{cases}
                +1, & x\ge0,\\
                -1, & x<0.
            \end{cases}
    \end{equation}
    Second, throughout the discussion below we consider the linear-response regime with \(h\to0\). These assumptions can nevertheless be generalized in a straightforward manner: the nonlinear regime can be treated as in Ref.~\cite{Andrei2026}, using an effective Floquet Hamiltonian. The derivations presented below therefore remain valid in the general case, up to minor modifications, but we adopt the assumptions above for simplicity.

    Under linear response regime, the sensor is described by the Floquet unitary,
    \begin{equation}
        \hat{U}_{h\to0}(nT) = \left( \hat{X} e^{-i \hat H_F T} \right)^n
    \end{equation}
    where $\hat X$ is the kick operator and $\hat H_F$ is the Floquet Hamiltonian. These operators satisfy the relations $\hat X^2 = \mathbb{I}$, $[\hat X,\hat H_F] = 0$, and $\{\hat X, \hat S_z\} = 0$.

    Due to the commutation relation, the kick and Floquet Hamiltonian share the same set of eigenstates, with the following spectral properties,
    \begin{equation}
        \label{eq:hf-floquet-schrodinger}
\hat{H}_F|E_{i}\rangle=E_{i}|E_{i}\rangle,\qquad\hat{X}|E_{i}\rangle=p_i|E_{i}\rangle,
    \end{equation}
    where \(p_i=\pm1\) is the ``parity'' of the \(i\)'th eigenstate. Using the above relations and some algebraic manipulations (see \cite{Andrei2026}), one can derive the HSO in the eigenbasis of the Floquet Hamiltonian,
    \begin{equation}
        \label{eq:Sh.expansion}
        \hat{S}_{h\to0}(nT)=\sum_{i,j}\mathcal{O}_{ij}R_{ij}(nT)|E_{i}\rangle\langle E_{j}|,
    \end{equation}
    where \(\mathcal{O}_{ij}=\langle E_{i}|\mathcal{\hat{S}}_z|E_{j}\rangle\) and the response term \(R_{ij}(nT)\) for sign function is
    \begin{equation}
        \label{eq:response.term}
        R_{ij}(nT)=\frac{\sin(\Delta_{ij}nT/2)}{\Delta_{ij}/2}\,e^{i\Delta_{ij}nT/2},
    \end{equation}
    with \(\Delta_{ij}=E_{i}-E_{j}\).

    As a result, since the energy gap of a cat subspace is exponentially small, its response terms give the dominating contribution to the HSO. Such a cat-paired subspace is defined by pairs of quasi-degenerate eigenstates with opposite parities for finite system sizes, represented by \(\{|E_{i}\rangle,|E_{\bar{i}}\rangle\}_{i=1}^M\), where \(M\le d_H\) with \(d_H\) the Hilbert-space dimension and the index \((i,\bar{i})\) denotes the \(i\)'th paired states. Moreover, under period-doubling resonance the response terms \(R_{ij}(t)\) dephase for times larger than their inverse gap (\(t \gtrsim t_{ij}^* \equiv \Delta^{-1}_{ij}\)). Therefore, after an initial transient time \(nT \sim O(\Delta_{i,j\neq i,\bar{i}}^{-1})\), the HSO reduces to a block-diagonal form along the cat-paired subspaces\cite{Andrei2026},
    \begin{equation}
        \label{eq:hso-block-diagonal}
        \begin{aligned}
            &\hat{S}_{h\to0}(nT) \approx   \hat{S}_{\mathrm{bd}}(nT)\\
            & =
        \begin{pmatrix}
            \hat{s}^{[i_1\bar{i}_1]}(nT) & 0      & \hdots                      & 0                     \\
            0                           & \ddots &                             & \vdots                \\
            \vdots                      &        & \hat{s}^{[i_M\bar{i}_M]}(nT) & 0                     \\
            0                           & \hdots & 0                           & \hat{0}^{[d \times d]}
        \end{pmatrix}.
        \end{aligned}
    \end{equation}
    Here, the approximation neglects terms that are sublinear in time, and \(\hat{s}^{[i\bar{i}]}(nT) \equiv (\hat{S}_{h\to0}(nT))_{m,n=(i,\bar{i})}\) are the block-diagonal terms. Specifically, these are \(2\times2\) matrices whose elements are those of the paired cat states,
    \begin{equation}
        \hat{s}^{[i\bar{i}]}(nT) =
        \begin{pmatrix}
            0 & \mathcal{O}_{i\bar{i}}R_{i\bar{i}}(nT) \\
            \mathcal{O}_{\bar{i}i}R_{\bar{i}i}(nT) & 0
        \end{pmatrix},
    \end{equation}
    and \(\hat{0}^{[d\times d]}\) is the null matrix of size \(d\times d\), with \(d = d_H - 2M\) the dimension of the remaining spectrum. Substituting this block-diagonal ansatz for \(\hat{S}_{h\to0}\) into Eq.~\eqref{SLD:general.form} preserves its diagonal part, provided all operators share a common block structure. It is therefore sufficient to analyze a single \(2\times2\) cat-subspace block before addressing the general structure.

    \subsection{SLD in a single cat subspace}

    In the simpler, but nontrivial, case we consider an initial state within a single paired cat subspace, as follows,
    \begin{equation}
        \label{eq:init-psi}
        \ket{\psi^{[i]}_0} = \cos\left(\tfrac{\theta_i}{2}\right)\ket{E_{i}} + e^{i\varphi_i} \sin\left(\tfrac{\theta_i}{2}\right)\ket{E_{\bar{i}}},
    \end{equation}
    where $\ket{E_i}$ and $\ket{E_{\bar{i}}}$ denotes quasi-degenerate eigenstates of opposite parity $p_{\bar{i}} = -p_i$ for finite system sizes.
    In this case the stroboscopic evolution under $\hat{U}_{h\to0}(nT)$ is given by
    \begin{equation}
        \hat{U}_{h\to0}(nT)\ket{E_{i}} = p_i^n e^{-i E_{i} nT} \ket{E_{i}},
    \end{equation}
    where $p_i = \pm 1$ encodes the parity of $\ket{E_{i}}$. Therefore, substituting the SLD expansion Eq.~\eqref{eq:Sh.expansion}, response terms Eq.~\eqref{eq:response.term} and initial state Eq.~\eqref{eq:init-psi} into Eq.~\eqref{SLD:general.form}, and moreover using the above relation with the properties of $\hat{\mathcal{S}_z}$ in the cat subspace ($\textrm{Im} (\mathcal{O}_{i\bar{i}})=0$)~\cite{SM}, one obtains:
    \begin{equation}
        \begin{aligned}
            &\hat{L}_{h\to0}(nT) =  \\
            \mathcal{O}_{i\bar{i}} &\bigg(
            \mathcal{X}_{i}(nT)\ket{E_{i}}\bra{E_{i}} + \mathcal{Z}^*_{i}(nT) \ket{E_{i}}\bra{E_{\bar{i}}} \\
            & + \mathcal{Z}_{i}(nT) \ket{E_{\bar{i}}}\bra{E_{i}} - \mathcal{X}_{i}(nT)\ket{E_{\bar{i}}}\bra{E_{\bar{i}}} \bigg)
        \end{aligned}
    \end{equation}
    where the coefficients are given by,
    \begin{equation}
        \label{eq:coefficients-bd}
        \begin{aligned}
            \mathcal{X}_{i}(nT)&\equiv \frac{2 \sin{\left(\Delta_{i\bar{i}}  nT / 2\right)}}{\Delta_{i\bar{i}}/2}\sin (\theta_i ) \sin\left(\Delta_{i\bar{i}}  nT/2+\varphi_i \right), \\
            \mathcal{Z}_{i}(nT)&\equiv i (p_i p_{\bar{i}})^n \frac{2 \sin{\left(\Delta_{i\bar{i}}  nT / 2\right)}}{\Delta_{i\bar{i}}/2}  \cos(\theta_i) e^{-i\Delta_{i\bar{i}}  nT/2}.
        \end{aligned}
    \end{equation}
    Here, we refrain from simplifying the case $p_i p_{\bar{i}}=-1$ in order to preserve the general structure of the expressions. One may verify that these coefficients reproduce the analytical expansion for $\hat L_{h\to0}$ obtained for the toy model in Eq.~\eqref{eq:single.spin.model} for the single-spin case.\cite{SM} The coefficients have dominant prefactors $\Delta^{-1}_{i\bar{i}} \propto e^N$ for FTC, which allows for further simplification of the analysis.

    \begin{table*}
        \centering
        \begin{tabular}{l c c c}
            \hline
            \textbf{Initial state} & \textbf{Optimal observable} & \textbf{Time window} & \textbf{Dynamics term} \\
            \noalign{\vskip 2mm}
            \hline
            Polarized ($\theta_i = \pi/2$, $\varphi_i = 0$) & Parity $\hat{X}$ & $t \gtrsim O(1)$ & $\mathcal{X}_{i}(nT)$ \\
             Cat state ($\theta_i = \pi/2$, $\varphi_i \neq 0$)   & Parity $\hat{X}$ & $t \gtrsim O(1)$ & $\mathcal{X}_{i}(nT)$ \\
            Cat state ($\theta_i = 0$, $\varphi_i = 0$) & $\hat{\mathcal{Y}} = i \hat{\mathcal{S}}_z \hat X$ & $t \lesssim \Delta_{i\bar{i}}^{-1}$ & $\Im\{\mathcal{Z}_{i}(nT)\}$ \\
            Cat state ($\theta_i = 0$, $\varphi_i = 0$) & Signal $\hat{\mathcal{S}}_z$ & $t \simeq \Delta_{i\bar{i}}^{-1}$ & $\Re\{\mathcal{Z}_{i}(nT)\}$ \\
            \noalign{\vskip 2mm}
            \hline
        \end{tabular}
        \caption{Optimal observables read off from the structure of $\hat{L}_{h\to0}$ for different initial states (see Eq.~\eqref{eq:SLDbd-initial-state}, where $\{\theta_i,\varphi_i\}$ are considered for simplicity identical for all $1 \le i \le M$), together with the time window over which each observable is optimal and the corresponding dynamical term (Eq.~\eqref{eq:coefficients-bd}). Here $\Delta_{i\bar{i}}^{-1}$ denotes the time inversely proportional to the cat-subspace gap $\Delta_{i\bar{i}}$. The condition $t \lesssim \Delta_{i\bar{i}}^{-1}$ indicates that the maximum overlap with the optimal observable is reached before the characteristic cat-subspace time, whereas $t \simeq \Delta_{i\bar{i}}^{-1}$ means that it is reached only around that time.  The condition $t\gtrsim O(1)$ indicates that this occurs after the initial transient time, required for the emergence of the block-diagonal structure in the SLD operator.
        }
        \label{tab:summary.observables}
    \end{table*}

    \subsection{Block-diagonal structure of the SLD}

    We consider  a general initial state in the form,
    \begin{equation}
        \label{eq:SLDbd-initial-state}
        |\psi_0\rangle = \sum_i c_i |\psi_0^{[i]}\rangle + c_\chi |\chi \rangle,
    \end{equation}
    where $\sum_{i=1}^M |c_i|^2 + |c_{\chi}|^2 = 1$, $|\psi_0^{[i]}\rangle$ are defined by Eq.~\eqref{eq:init-psi} and $|\chi \rangle$ is a state orthogonal to the cat subspaces. In density matrix form, this initial state can be written as,
    \begin{equation}
    	\label{eq:rho-general}
            \hat{\rho}_0 = \hat \rho_{\textrm{bd}} + \hat \rho_{\textrm{off}}
    \end{equation}
     where we explicitly decompose it in terms of its block diagonal terms along the cat subspaces,
     \begin{eqnarray}
         \label{eq:rho-diagonal-part}
        \hat \rho_{\textrm{bd}} &=&
            \sum_{i=1}^M |c_i|^2 \ket{\psi^{[i]}_0} \bra{\psi^{[i]}_0},
    \end{eqnarray}
    and those off-diagonal and orthogonal to it,
    \begin{eqnarray}
            \hat \rho_{\textrm{off}} &=&
             \sum_{i\ne j}^M c_i c^*_j \ket{\psi^{[i]}_0} \bra{\psi^{[j]}_0}
            + \sum_{i=1}^M  \left( c_i c^*_{\chi}\ket{\psi^{[i]}_0} \bra{\chi} + \text{h.c.} \right) \nonumber \\
            & & \qquad + |c_{\chi}|^2 \ket{\chi}\bra{\chi}.
    \end{eqnarray}

We can decompose the contribution of these two initial state terms in the SLD as given by Eq.~\eqref{SLD:general.form}, i.e.
 \begin{equation}
        \hat{L}_{h\to0} = \hat{L}_{\textrm{bd}} + \hat{L}_{\textrm{off}}
    \end{equation}
    where
    \begin{eqnarray}
        \hat{L}_{\textrm{bd}} &=& 2i  \hat{U}_{h\to0}(nT) \Big[\hat \rho_{\textrm{bd}} \,, \hat{S}_{h\to0}(nT) \Big] \hat{U}^{\dagger}_{h\to0}(nT), \nonumber \\
        \hat{L}_{\textrm{off}} &=& 2i  \hat{U}_{h\to0}(nT) \Big[\hat \rho_{\textrm{off}} \,, \hat{S}_{h\to0}(nT) \Big] \hat{U}^{\dagger}_{h\to0}(nT). \nonumber \\
    \end{eqnarray}

An important observation is now in order. As previously mentioned, after the initial transient time the HSO operator $\hat S_{h\to0}$ takes a block-diagonal form (Eq.~\eqref{eq:hso-block-diagonal}).
Consequently, the diagonal or off-diagonal structure of the initial state  are inherited to their corresponding SLD decomposition terms.
In particular, the block diagonal SLD term reduces to,
    \begin{equation}
        \label{eq.heisenb.signal.blockdiagonal}
        \begin{aligned}
            \hat{L}_{\textrm{bd}}
            & \approx
            \begin{pmatrix}
                \hat{l}^{[1\bar{1}]}(nT) &  0      & \hdots                      & 0      \\
                0                         &  \ddots   &                             & \vdots \\
                \vdots                    &           & \hat{l}^{[M\bar{M}]}(nT) & 0      \\
                0                         &  \hdots   &    0                        & \hat{0}^{[d \times d]}
            \end{pmatrix}
        \end{aligned}
    \end{equation}
    which is  therefore limited to cat-subspaces after the initial transition time.  Each of such blocks are given by,
    \begin{equation}
        \begin{aligned}
            \hat{l}^{[i\bar{i}]} (nT) = \mathcal{O}_{i\bar{i}} |c_i|^2
            \begin{pmatrix}
                 \mathcal{X}_{i}(nT) & \mathcal{Z}^*_{i}(nT) \\
                 \mathcal{Z}_{i}(nT) & -\mathcal{X}_{i}(nT)
            \end{pmatrix},
        \end{aligned}
    \end{equation}
    with same coefficients as Eq.~\eqref{eq:coefficients-bd}.

    Given such a structure we can compute the similarity of the SLD operator to a few relevant observables. Specifically, its similarity to the parity, magnetization or a dressed parity-magnetization observable, defined as $\hat{\mathcal{Y}} \equiv i \hat{\mathcal{S}}_z \hat X$, which can be expanded as follows:
\begin{equation}
        \begin{aligned}
            \hat X &= \sum_{i=1}^M p_i \left(\ket{E_i} \bra{E_i} - \ket{E_{\bar{i}}} \bra{E_{\bar{i}}} \right) + { X }_\chi \ket{\chi} \bra{\chi}, \\
                            \hat{\mathcal{Y}}
            &\approx \sum_{i=1}^M  i \mathcal{O}_{i\bar{i}} p_i\left(\ket{E_{\bar{i}}} \bra{E_{i}} - \ket{E_i} \bra{E_{\bar{i}}}\right) + i \mathcal{O}_{\chi} { X }_\chi \ket{\chi} \bra{\chi}, \\
            \hat{\mathcal{S}}_z &\approx \sum_{i=1}^M \mathcal{O}_{i\bar{i}} \left(\ket{E_i} \bra{E_{\bar{i}}} + \ket{E_{\bar{i}}} \bra{E_{i}}\right) + \mathcal{O}_{\chi}\ket{\chi} \bra{\chi}.
        \end{aligned}
    \end{equation}

    In the above expansion we neglect terms of the $\hat{\mathcal{S}}_z$ operator which are non-diagonal in the cat-subspaces.  Although this constitutes a simplification, it is grounded in the microscopic structure of cat-like states, where the dominant terms reside within the same subspace. To corroborate this intuition, we analytically and numerically evaluated this approximation for an FTC based on the Lipkin-Meshkov-Glick (LMG) model~\cite{russomanno_floquet_2017}. Our findings confirm this picture---see the analysis in the Supplemental Material~\cite{SM}.

    In this way, since the observables considered above are all block-diagonal in the cat-subspaces, their similarity to the SLD stems entirely from the $\hat{L}_{\textrm{bd}}$ term. Consequently, the overlap of the SLD with respect to these relevant observables approximates to,
    \begin{equation}
        \label{eq:SLD-projections}
        \begin{aligned}
            \mathrm{Tr}\{\hat X\, \hat{L}_{h\to0}(nT)\} &\approx \sum_{i=1}^M |c_i|^2 p_i\, \mathcal{O}_{i\bar{i}}\, \mathcal{X}_{i}(nT), \\
                            \mathrm{Tr}\{\hat{\mathcal{Y}}\, \hat{L}_{h\to0}(nT)\} &
            \approx \sum_{i=1}^M |c_i|^2 p_i \mathcal{O}_{i\bar{i}}^2\, \Im\{\mathcal{Z}_{i}(nT)\}, \\
            \mathrm{Tr}\{\hat{\mathcal{S}}_z\, \hat{L}_{h\to0}(nT)\} &\approx \sum_{i=1}^M |c_i|^2 \mathcal{O}_{i\bar{i}}^2\, \Re\{\mathcal{Z}_{i}(nT)\}.
        \end{aligned}
    \end{equation}

The above expressions reveal interesting simplifying structures -- as summarized in Table~\ref{tab:summary.observables}.
For a polarized initial state $\theta_i = \pi/2$, $\varphi_i = 0$, one has that $\mathcal{Z}_{i} = 0$ for all $0 \le i \le M$, therefore negligible overlap to $\hat{\mathcal{Y}}$ or $\mathcal{S}_z$ operators. By contrast, the parity operator has a non-negligible overlap to the SLD operator. Consequently, the optimal observable is expected to align with it. We recall that the scaling of this overlap corresponding to the term $\mathcal{O}_{i\bar{i}}$ -- in this case linearly with $N$ -- is immaterial since, as discussed, the MoM is invariant under a rescaling of the observable. Moreover, while a polarized state is usually associated with the case $\varphi_i=0$, a correlated cat-state structure would emerge for $\varphi_i \neq 0$. Nevertheless, the same reasoning also follow in this case, i.e. with $\theta_i = \pi/2$, but $\varphi_i \neq 0$, showing that with a simple parity observable, one could  saturate the QFI bound within the MoM given the preparation of such a state.

 In the case of different cat states with $\theta_i = \varphi_i = 0$, the parity coefficients now satisfy $\mathcal{X}_{i} = 0$, while the non-negligible terms arise along the
 $\hat{\mathcal{Y}}$ or $\mathcal{S}_z$ observables.
Therefore, the SLD operator is expected to be aligned with these last two observables. Notice that for times smaller than the FTC lifetime, $t \lesssim \Delta_{i \bar{i}}^{-1}$, the imaginary term of $\mathcal{Z}_i(t)$ are dominant, and in this way the SLD in this regime is dominantly a dressed parity-magnetization observable. On the other hand, for longer times, $t \simeq \Delta_{i \bar{i}}^{-1}$, the real part of $\mathcal{Z}_i(t)$ prevails, turning the SLD operators dominantly a bare magnetization observable.

\section{NMR platform} \label{sec:NMR platform}

    In this section we apply the theoretical framework developed previously to a specific physical system aiming to evaluate the MoM performance in estimating the unknown parameter $h$. Precisely, by: (i) evaluating the MoM estimation protocol in light of the SLD-based theoretical limit; (ii) implementing this analysis in the context of NMR, a platform widely employed in quantum metrology -- though rarely explored within the context of FTC phases; and (iii) bridging the gap between abstract theoretical derivation and practical realization. By utilizing realistic experimental parameters, we move beyond idealized numerical simulations, ensuring our analysis remains grounded in the physical constraints of contemporary NMR setups. In doing so, we address the critical challenges imposed by the finite lifetime of these FTC phases, specifically assessing how this temporal instability constrains the optimal measurement window for the MoM protocol. The physical model based on acetonitrile is presented in the next section.

    \subsection{NMR sensor}
    We consider a star-topology NMR system ~\cite{Mahesh_2021}, structurally analogous to molecules such as acetonitrile (CH$_3 $CN)~\cite{pal2018temporal}. The model---shown schematically  in Fig.~\ref{fig:model}---is particularly well-suited for experimental implementation~\cite{Uhlig2019}.

    \begin{figure}
        \centering
        \includegraphics[width=0.8\linewidth]{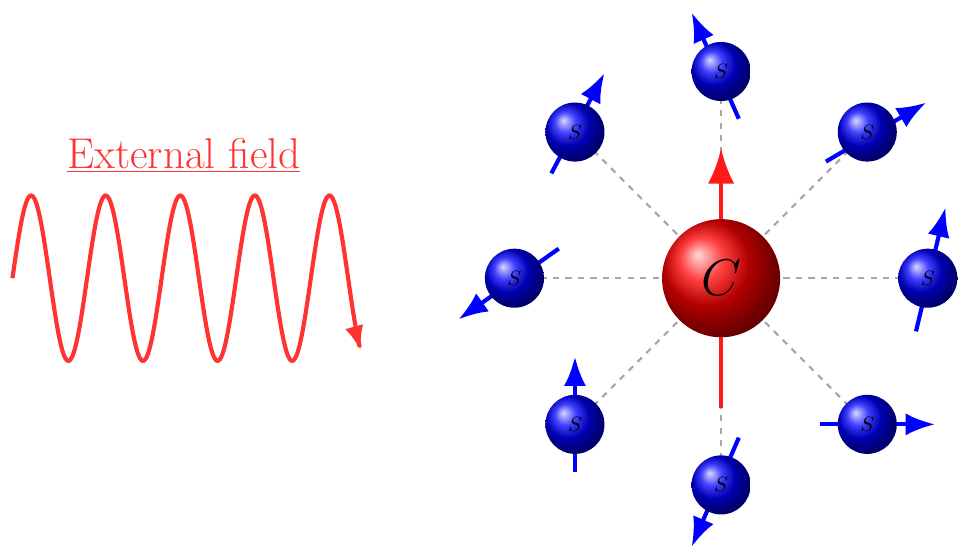}
        \caption{Conceptual diagram of the star-topology model~\cite{Mahesh_2021} for an NMR-based quantum sensor. The central spin (\textbf{C}) is coupled via $J$ to peripheral spins (\textbf{S}). The sensor interacts with an external ac field $h$, the amplitude of which is the parameter to be estimated.}
        \label{fig:model}
    \end{figure}

    The NMR spin system can be modeled with a star topology symmetry described by the Hamiltonian,

    \begin{equation}
    	\label{eq:nmr1}
    	\hat H(t) = \hat H_{\rm Is} + \hat H_{\rm ds} + \hat H_{\rm kick}(t) + \hat H_{\rm ac}(t),
    \end{equation}
    with,
    \begin{eqnarray}
    	\hat H_{\rm Is} &=& - (J/4) \hat \sigma_0^z  \sum^{N-1}_{i=1} \hat \sigma^z_i, \nonumber \\
        \hat H_{\rm ds} &=& \sum^{N-1}_{i=0} (h^x_i \hat \sigma^x_i + h^z_i \hat \sigma^z_i)/2, \nonumber \\
        \hat H_{\rm kick}(t) &=& (\vartheta - \epsilon_{\vartheta})\sum_{n=0}^\infty \delta(t-nT) \hat{M}_x, \nonumber \\
        \hat H_{\rm ac}(t) &=& hf(t)\hat{M}_z.
    \end{eqnarray}
        Here, $\hat \sigma_i^{\alpha=x,y,z}$ denotes the Pauli operators acting on the $i$'th spin, and $\hat{M}_\alpha = \sum_{i=0}^{N-1} \hat \sigma_i^\alpha/2$ represents the total spin magnetization.
        The term $ \hat H_{Is}$ describes an Ising-type interaction of strength $J$ between the central spin $(\hat \sigma^z_0)$ and each of $N-1$ peripheral spins. The second term,  $\hat H_{ds}$, accounts for disordered static local fields, consisting of longitudinal $(h_i^z)$ and transverse $(h_i^x)$ components. The disordered terms have strength $h_i^z \in [0, \, h^z]$,
        $h_i^x \in [-h^x, \, h^x]$,  chosen from a uniform distribution in the range of $h^z$,  $h^x$. The kick term $\hat H_{\textrm{kick}}$ represents a sequence of periodic $\vartheta-\epsilon_{\vartheta}$ rotations around the $x$ axis applied at intervals $T$, modeled as delta-function pulses. Finally, $\hat{H}_{ac}$ describes the external signal to be estimated, given by a global, sinusoidal driving field $f(t) = \sin(\omega_h t)$ with amplitude $h$ and frequency $\omega_h$.

    \begin{figure*}
        \centering
        \includegraphics[width=0.325\linewidth]{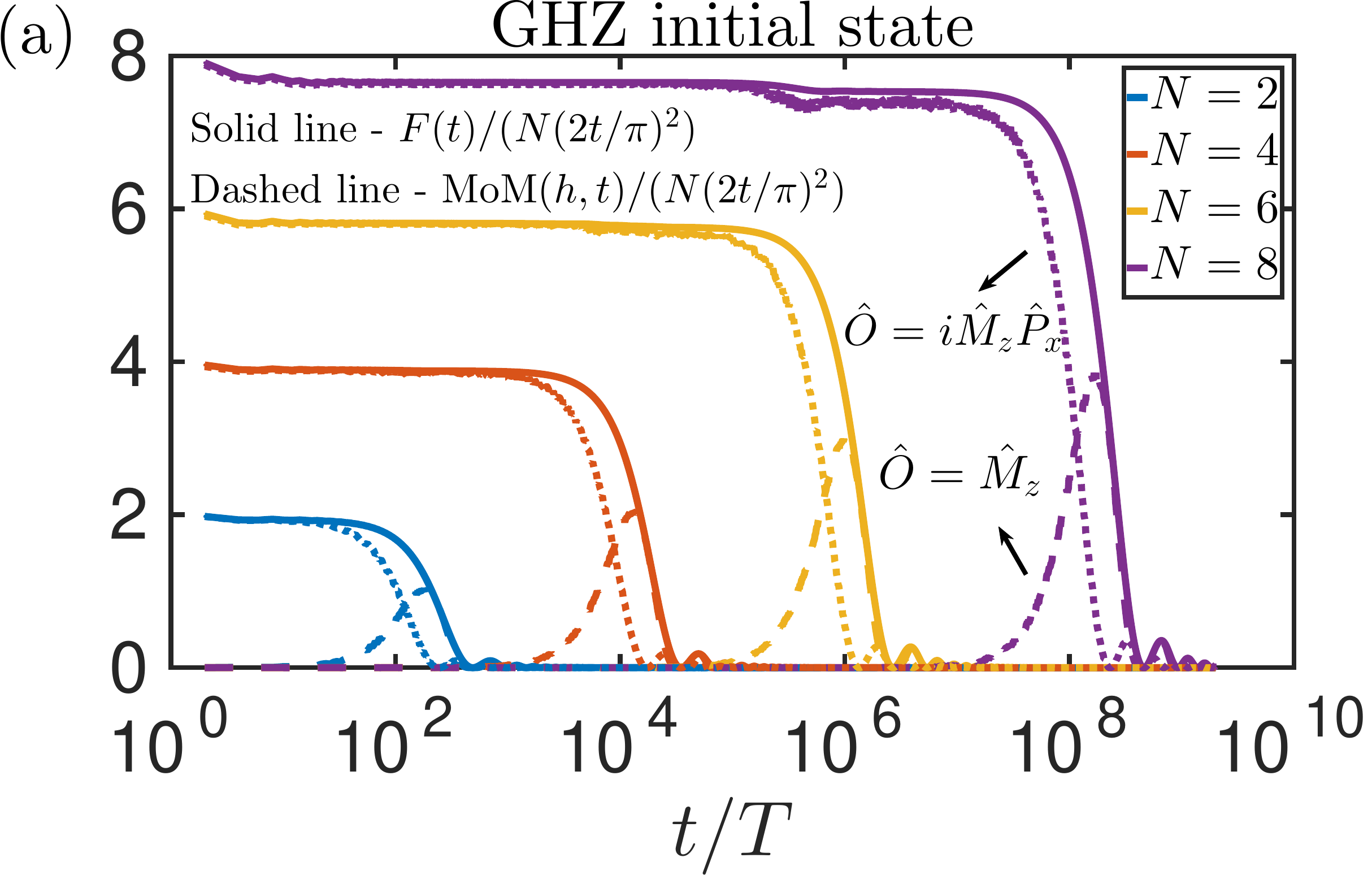}
        \includegraphics[width=0.325\linewidth]{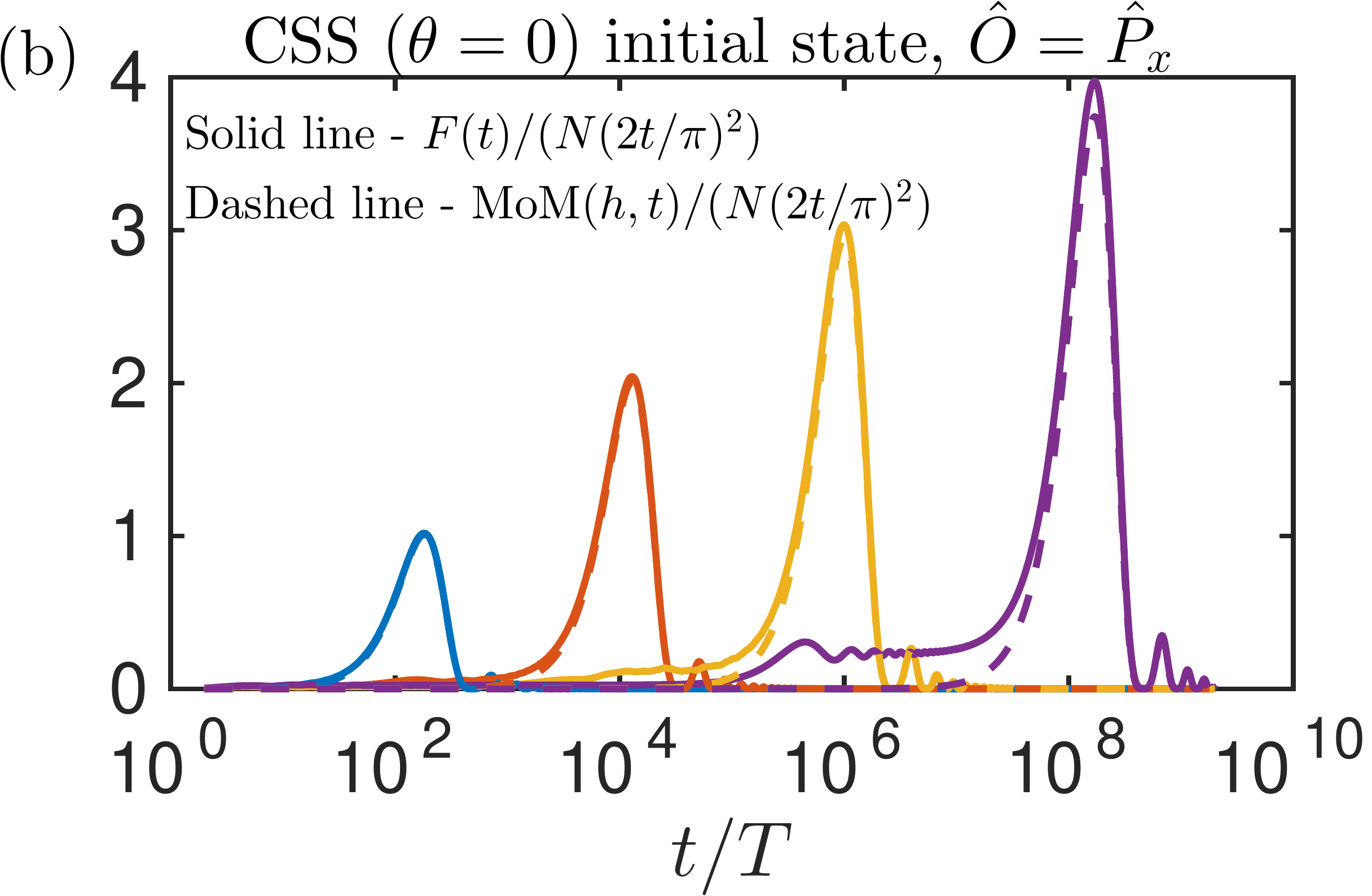}
        \includegraphics[width=0.33\linewidth]{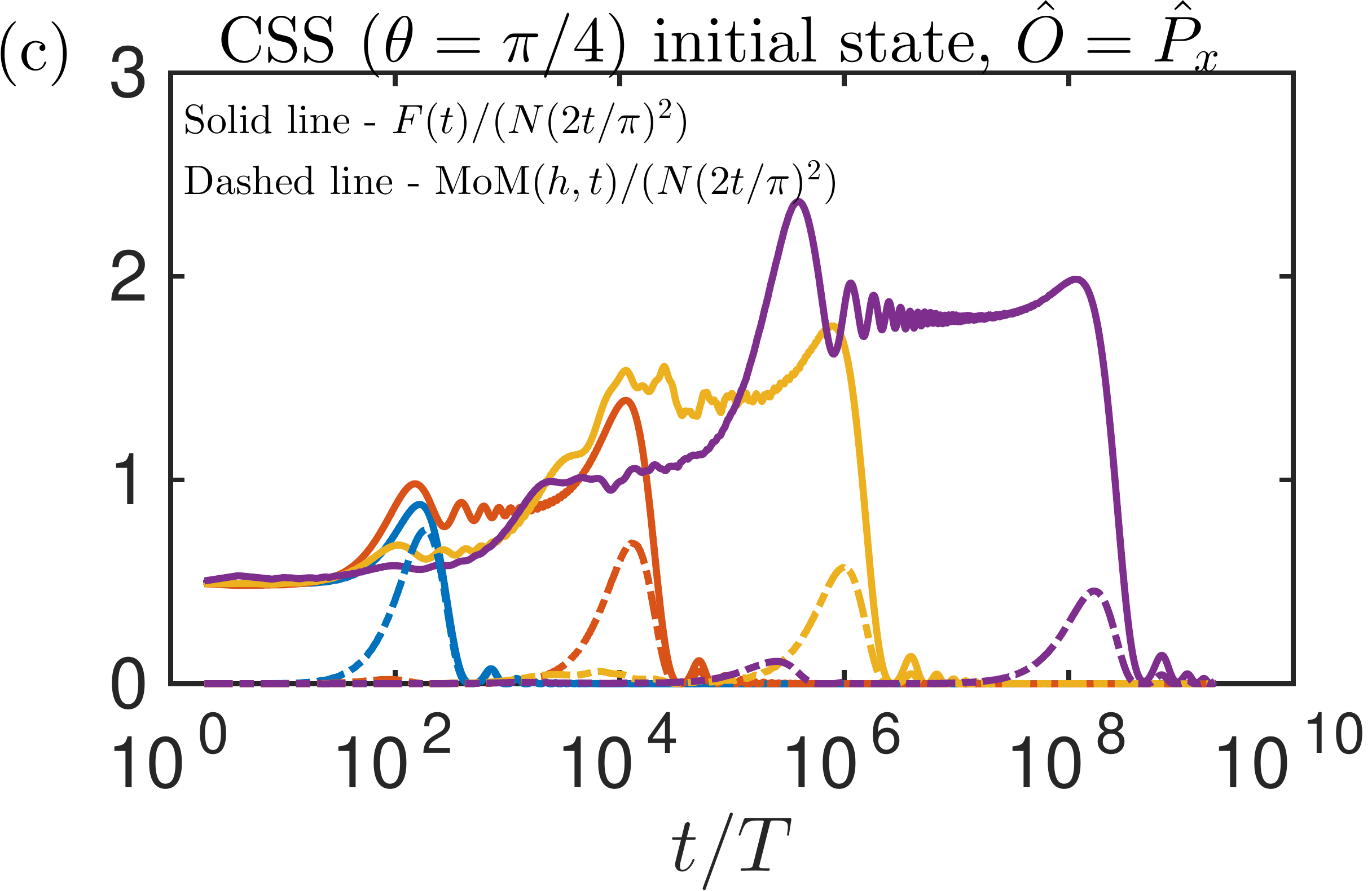}
        \caption{Dynamical behavior of the QFI (solid line) and MoM (dashed line) estimator in the linear response limit $(h \rightarrow 0)$, for   different system sizes $N$ and initial state preparations. (a) GHZ initial state with either the bare magnetization ($\hat M_{z}$) or a dressed parity-magnetization ($i\hat M_{z}\hat P_x$) observable. The QFI exhibits an initial plateau scaling quadratically with $N$, followed by a sudden collapse at the FTC lifetime. This behavior is  either closely saturated by MoM using the dressed observable, or tracked by a sharp peak at its lifetime by the bare magnetization observable. (b) CSS initial state with fully aligned spins $\theta = 0$, $\phi = 0$, or (c) misaligned ones $\theta = \pi/4$, $\phi = 0$. In both cases the MoM corresponds to the parity $\hat{P}_{x}$ as the observable. The dynamics reveal an initial $N$-independent regime, followed by non-monotonic growth before the final suppression.
        }
        \label{fig:3}
    \end{figure*}

    The phenomenology of this model is characterized by robust period-doubling dynamics of the magnetization, which persists for exponentially long times in the system size $N$-a hallmark feature of the FTC phase~\cite{PhysRevLett.120.180603, Choi2017, pal2018temporal}. This long-lived temporal order serves as a robust metrological resource, providing a stable periodic background against which external perturbations $h$ can be resolved with high precision~\cite{Moon2026, Andrmoon2024discretetimecrystalsensing}.

    Building upon this physical foundation, we now examine the metrological utility of these NMR dynamics for different initial states. Specifically, we analyze the MoM performance and demonstrate how the choice of observable affects the estimation, showing that it is possible to select good observables that remain experimentally feasible.

    \subsection{Dependence on the initial state and optimal observable}
    In laboratory implementations preparing the system’s initial quantum state is a critical step because it directly dictates the optimal observable for the metrology protocol. Significant experimental effort are typically devoted to preparing physical systems in either highly entangled Greenberger-Horne-Zeilinger (GHZ) states~\cite{GHZLI2025116257, GHZBugalho2025privaterobuststates} or coherent spin states (CSS)~\cite{CSS11396347, CSSChai_2025}. These experimental milestones provide the basis for our study on how the MoM protocol's performance is tied to the chosen initial state configuration. Specifically,  we consider the two paradigmatic cases:

    (i) GHZ initial state:
    \begin{equation}
        \ket{\rm{GHZ}} = \frac{\left( \ket{\uparrow}^{\otimes N } + \ket{\downarrow}^{\otimes N}  \right)}{\sqrt{2}},
    \end{equation}
which is characterized by its high sensitivity, making it an ideal candidate for detecting weak fields with high precision.

    (ii) CSS initial state:
    \begin{equation}
        |\rm{CSS}\rangle = \bigotimes^N_{i=1} \left( \cos(\theta/2)\ket{\uparrow}_i + e^{i\phi}\sin(\theta/2)\ket{\downarrow}_i \right),
    \end{equation}
where $\theta \in [0 \, \pi]$ and $\phi \in [0\,  2\pi]$,  which are typically employed in systems with a large number of particles with all spins polarized along the same direction.

   Following our previous discussion on optimal observables in the MoM estimator (Table~\ref{tab:summary.observables}), we associate the analysis of the total magnetization ($\hat M_z$) or a dressed parity-magnetization ($i\hat M_z \hat P_x$) given a GHZ initial state,  and the parity
\begin{equation}
        \hat P_x = \bigotimes^{N-1}_{i=0} \sigma_i^x \label{eq:px},
    \end{equation}
with the CSS initial configuration. 
We remark that, in general, the bare definitions of magnetisation $\hat M_z$ and parity $ \hat P_x$ introduced here do not necessarily coincide exactly with those discussed in Sec.~\ref{sec:SLD in Floquet Time Crystal}, which are derived from the effective Floquet Hamiltonian decomposition of the dynamics — a procedure that is typically nontrivial. Nonetheless, on physical grounds, one expects a significant overlap between these operators. 
Unless explicitly mentioned, we use in all our numerical simulations  an Ising coupling $J=2 $, a pulse of $ \vartheta = \pi$ with deviation of $\epsilon_{\vartheta} = 0.12$, disorder strength  $h^x = h^z = 0.01$, and a Floquet period $T=1$ which is a subharmonic frequency of the  ac field $\omega_h  = \frac{\pi}{T}$. We show the dynamics for a single  disorder realization, which is nevertheless representative of the general behavior.

 We show in Fig.~\ref{fig:3} our results for the dynamics of the QFI and MoM estimator. In Fig.~\ref{fig:3}(a) we see that, focusing first on the GHZ initial state, the QFI exhibits a high and stable plateau, persisting over timescales exponentially with the system size. Moreover, the plateau reaches the Heisenberg limit, where $F(t, h) \propto t^2 N^2$, due to a coherent collective dynamics and high entanglement among the spins in the system. This collective response ensures that the sensor remains highly sensitive as long as the system stays within the FTC phase. On the other hand, the MoM protocol (dashed lines) strongly depends on the measured observable. While for the dressed parity-magnetization observable it saturates the QFI till roughly the FTC lifetime, for the bare magnetization it is small most of the time showing though a localized peak that coincides with the onset of the QFI decay. Therefore, despite the simplicity of the bare magnetization observable, it can effectively concentrate the accumulated information about $h$ within a specific measurement window and serves as a practical tool for retrieving it, achieving maximum efficiency just before the finite lifetime of the FTC phase leads to eventual loss of sensitivity.

\begin{figure}
    \centering
    \includegraphics[width=0.5\linewidth]{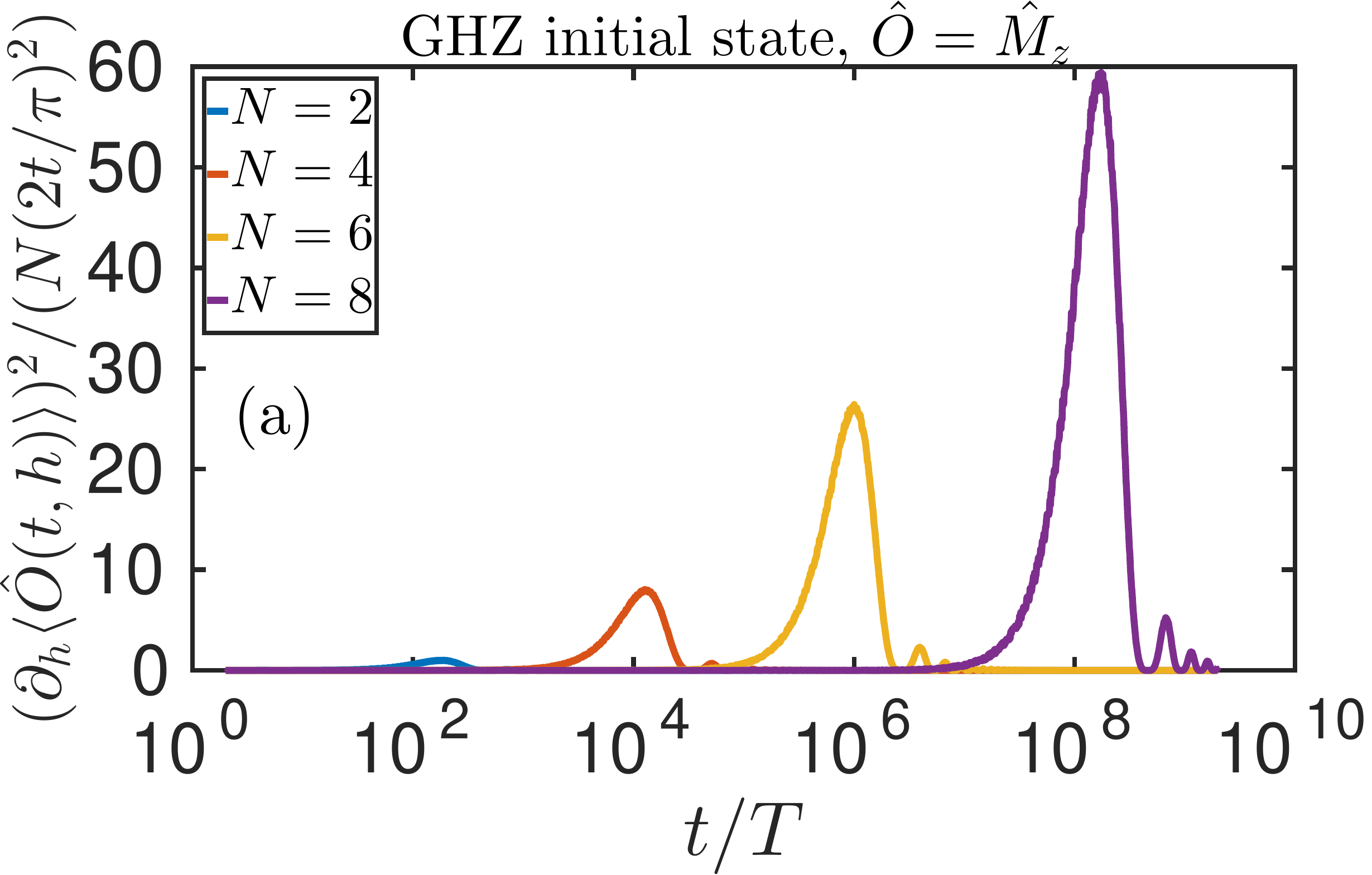}
    \includegraphics[width=0.46\linewidth]{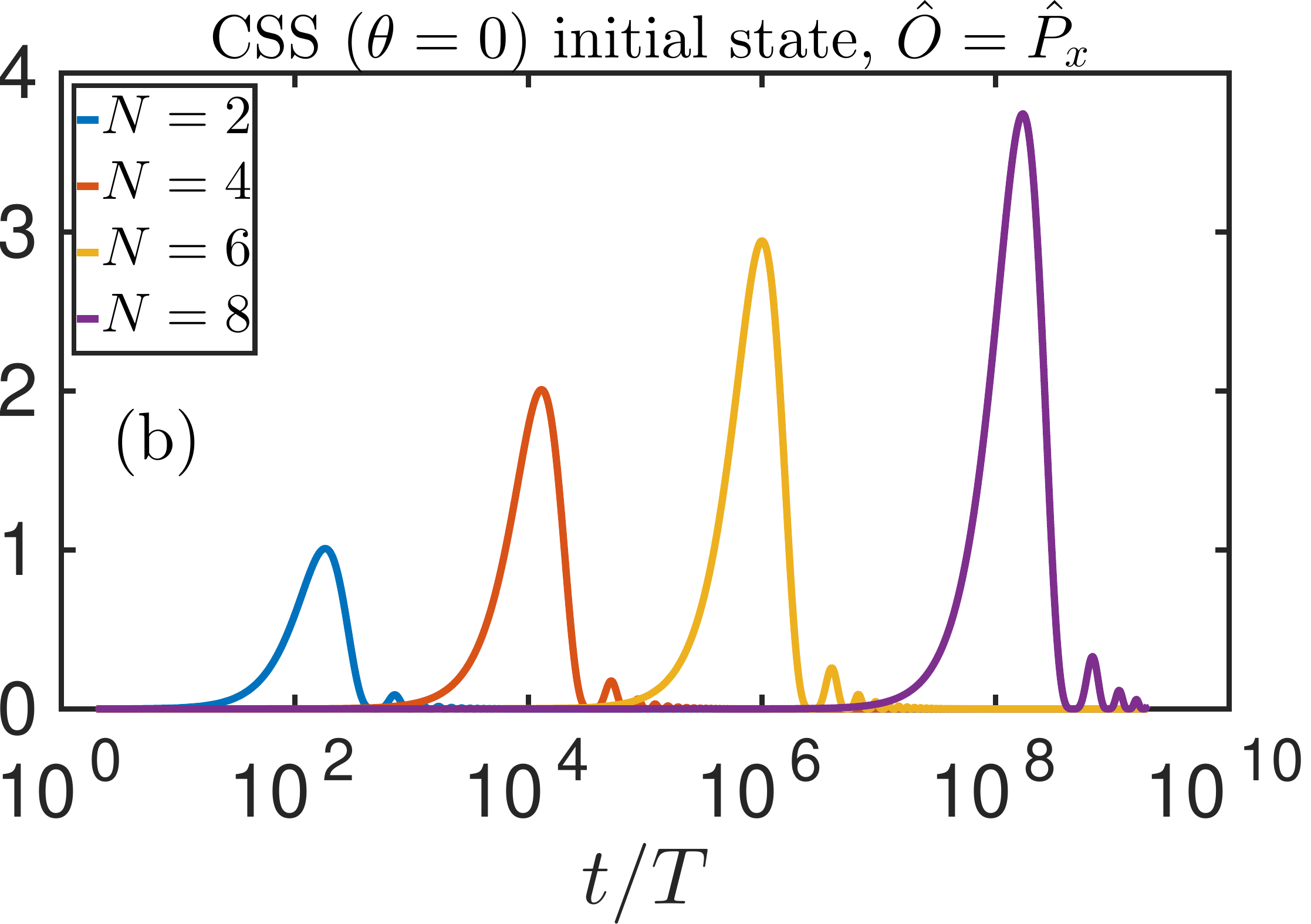}
    \includegraphics[width=0.49\linewidth]{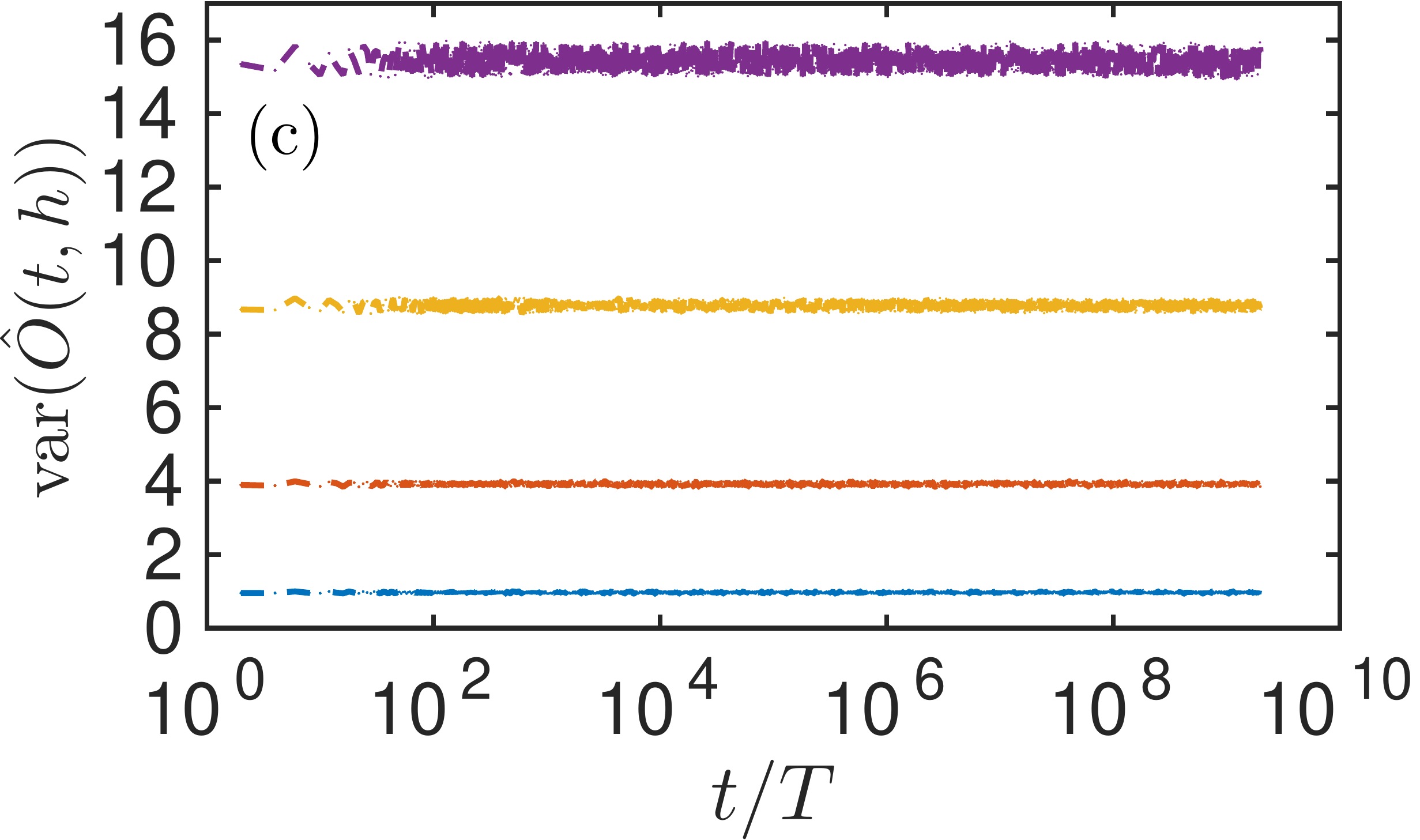}
    \includegraphics[width=0.49\linewidth]{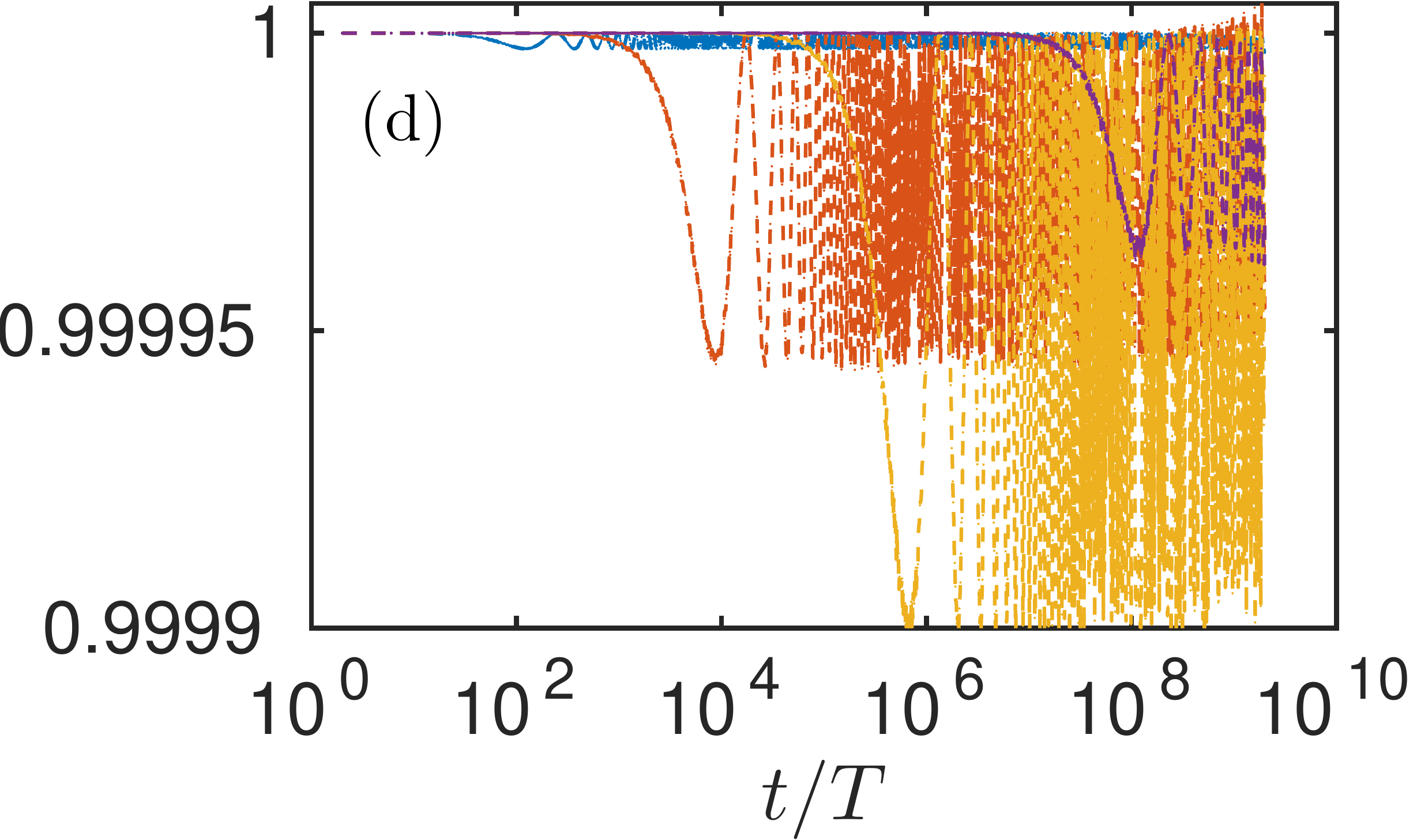}
    \caption{Decomposition of the MoM components for (left-panels) a GHZ initial state, using the magnetization ($\hat M_z$) as its observable; or (right-panel) a CSS initial state with $\theta = \phi=0$, using the parity $\hat P_x$ as its observable.  Panels (a) and (b) display the squared derivative of the expectation value $|\partial_h \langle \hat O \rangle|^2$, which captures the system's dynamic response to the external field. Panels (c) and (d) show the variance $\text{var}(\hat{O})$, which remains roughly independent of time throughout the FTC phase. In case (d), the variance is around unity, reflecting the uncorrelated nature of the spins. Meanwhile, in the GHZ case (c), the variance is significantly enhanced and scales with $N$ due to collective many-body correlations. The peaks in (a) and (b) reveal optimal measurement points in time where the accumulated phase under the periodic driving maximizes the information gain.
    }
    \label{fig:4}
\end{figure}

 Conversely, in Figs.~\ref{fig:3}(b-c), when considering the CSS initial state, the sensor initially operates within the standard quantum limit with a QFI scaling linearly with the system size $F_h \propto t^2 N$. In this regime, the sensitivity is limited by the statistical noise of uncorrelated measurements. Correlation effects emerge during the dynamics, when the Ising interactions and the kicking dynamics induce many-body correlations, enabling the system to surpass the SQL bound.
 In this context, the MoM performance remains initially suppressed compared to the GHZ case, but as the system evolves, it captures the information successfully.
 Specifically, for a CSS fully aligned along the $z$ direction ($\theta=0$), the MoM follows in close saturation to the QFI bound, with a peak concentrated at the FTC lifetime. On the other hand, for a misaligned CSS ($\theta =\pi/4$), the MoM is far from the QFI bound most of the time, reaching a peak only around the FTC lifetime.
 It  is important to notice that while the MoM reaches a peak that increases with $N$ in the aligned initial preparation ($\theta =0$), the same does not occur for the  misaligned case ($\theta =\pi/4$), exhibiting a peak that rather diminishes with increasing $N$. This observation highlights an important aspect of the MoM using the CSS, which requires a good alignment of the spins along the preferred FTC period doubling direction. Otherwise, despite having a lower cost to prepare such initial states, their effectiveness tends to remain below the Heisenberg limit for large system sizes.

\begin{figure}
    \centering
    \includegraphics[width=0.49\linewidth]{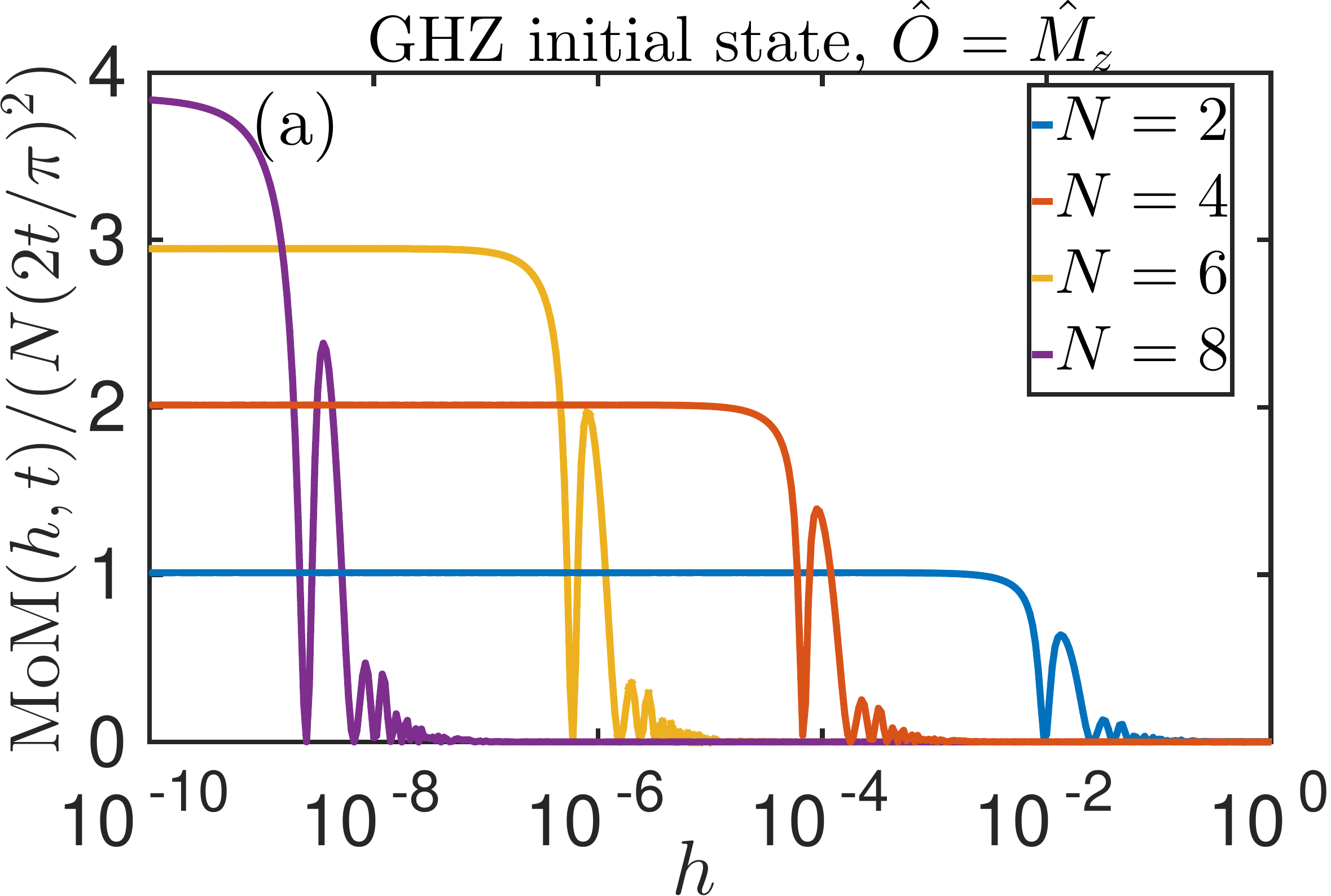}
    \includegraphics[width=0.49\linewidth]{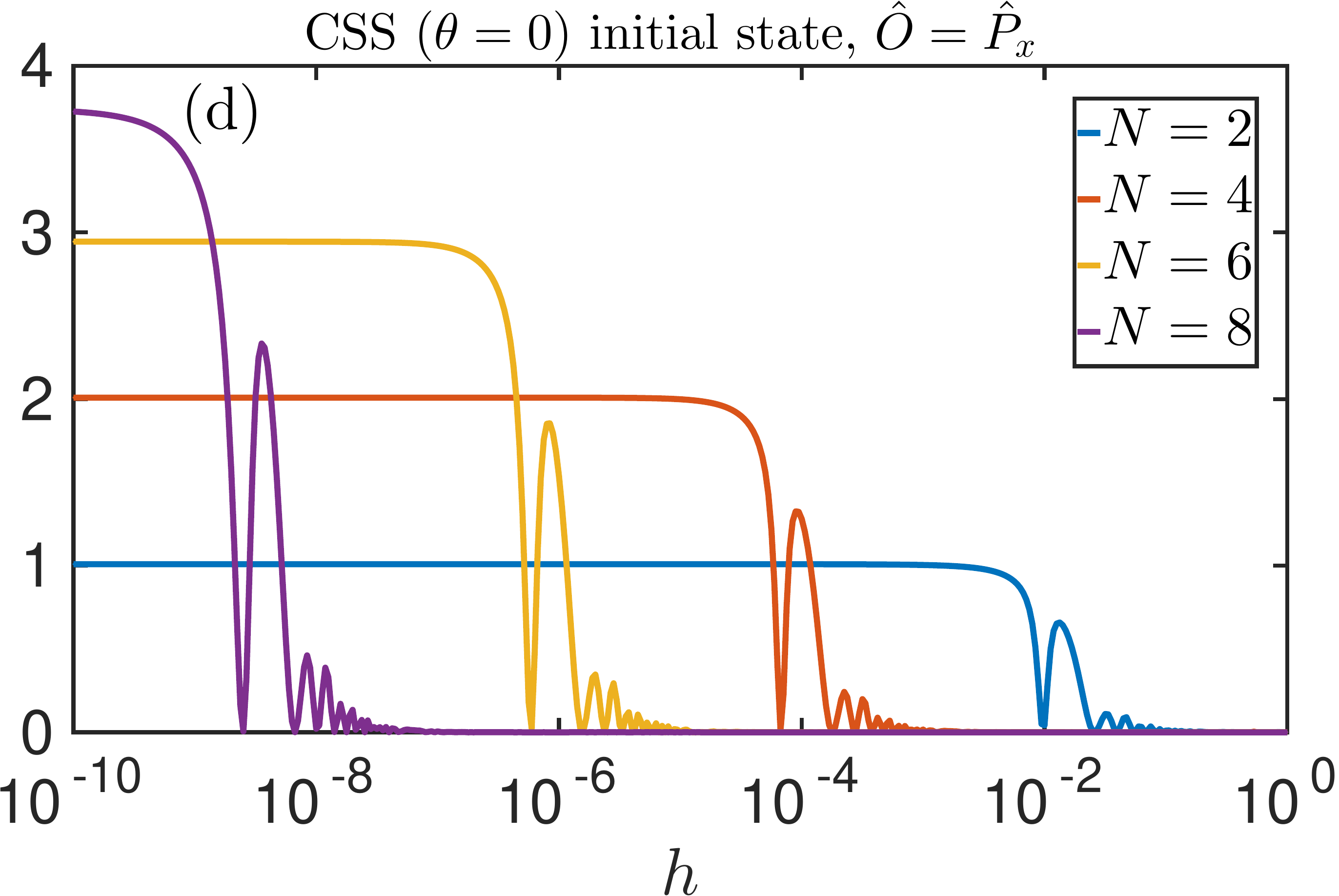}
    \includegraphics[width=0.49\linewidth]{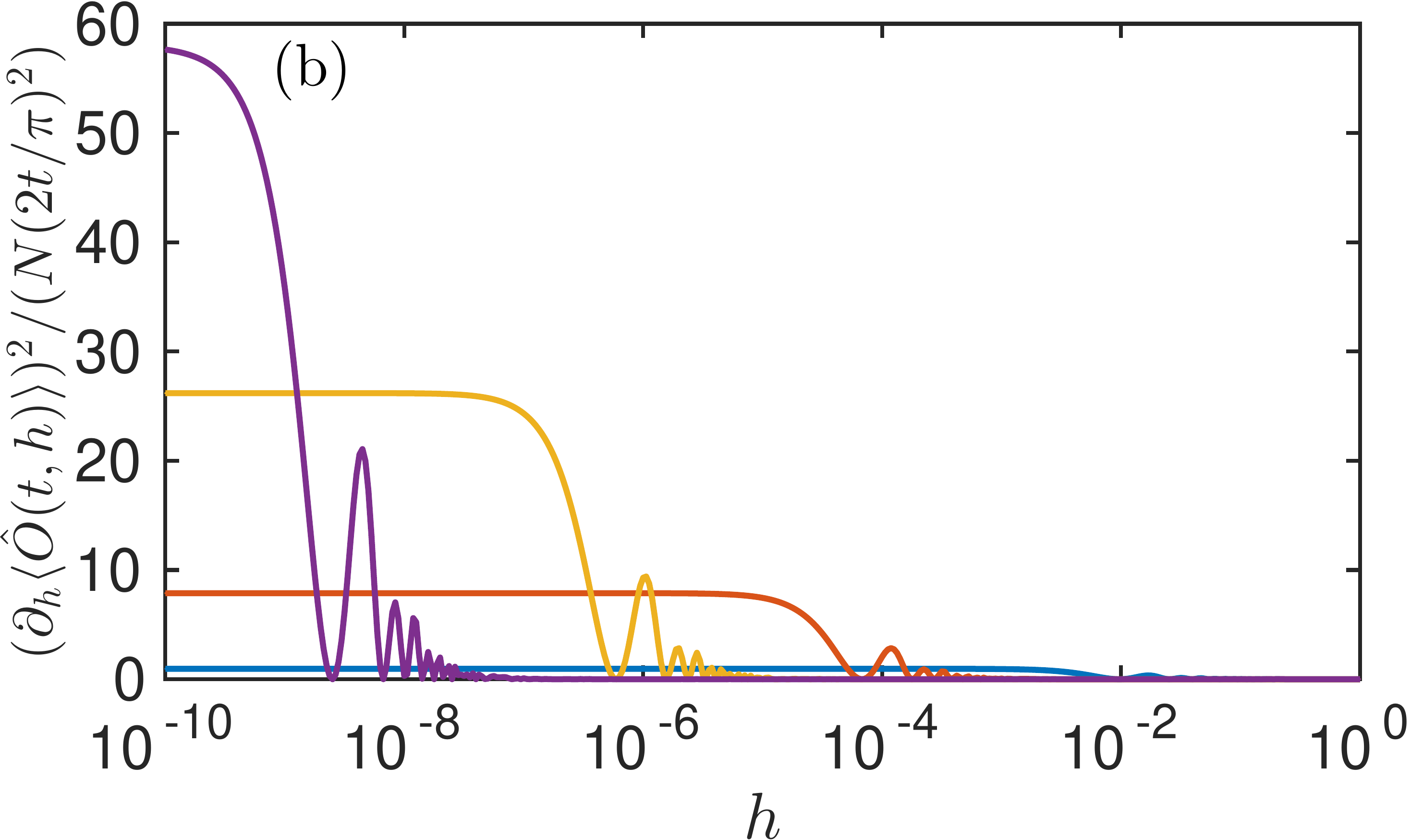}
    \includegraphics[width=0.49\linewidth]{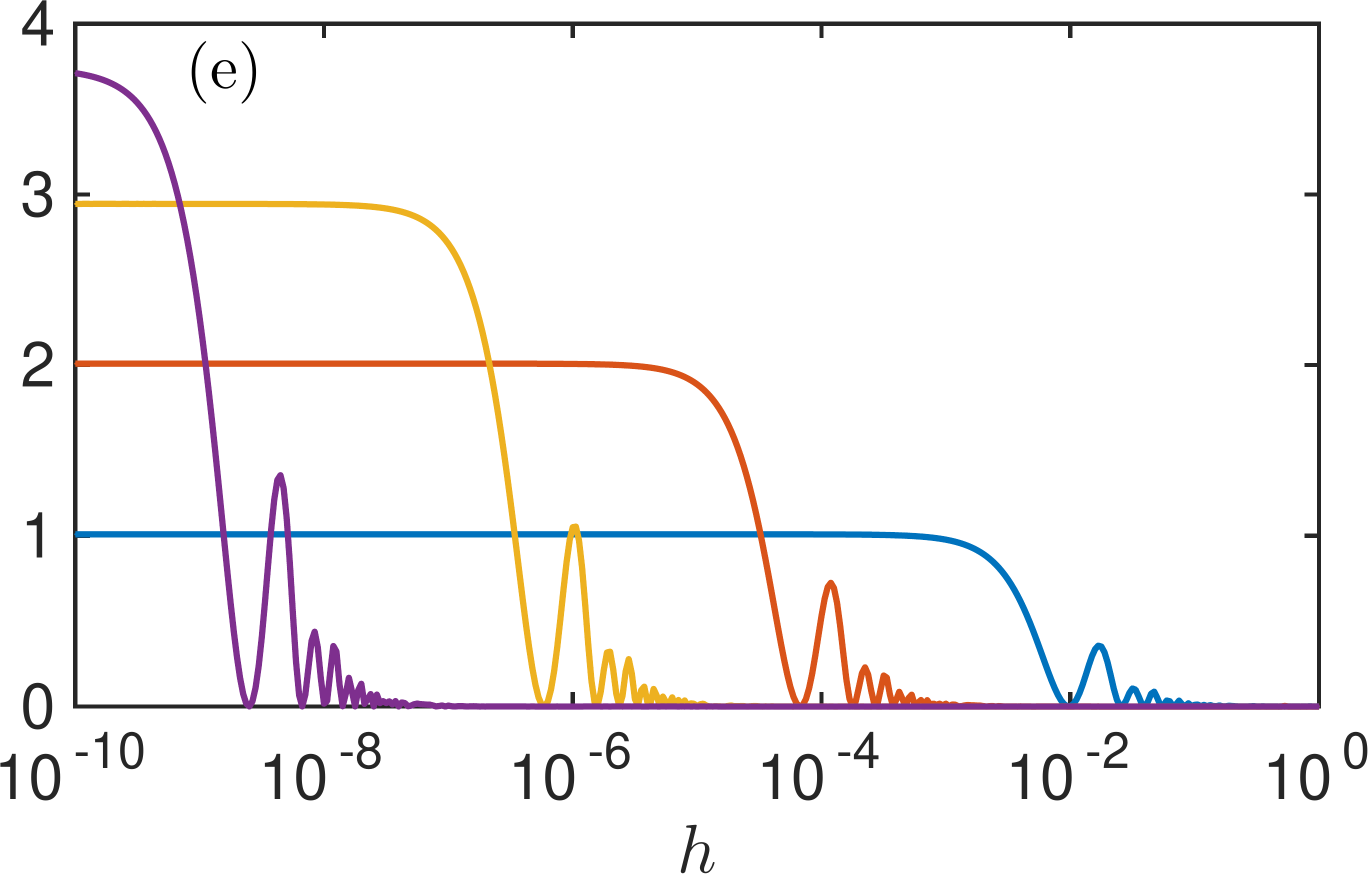}
    \includegraphics[width=0.49\linewidth]{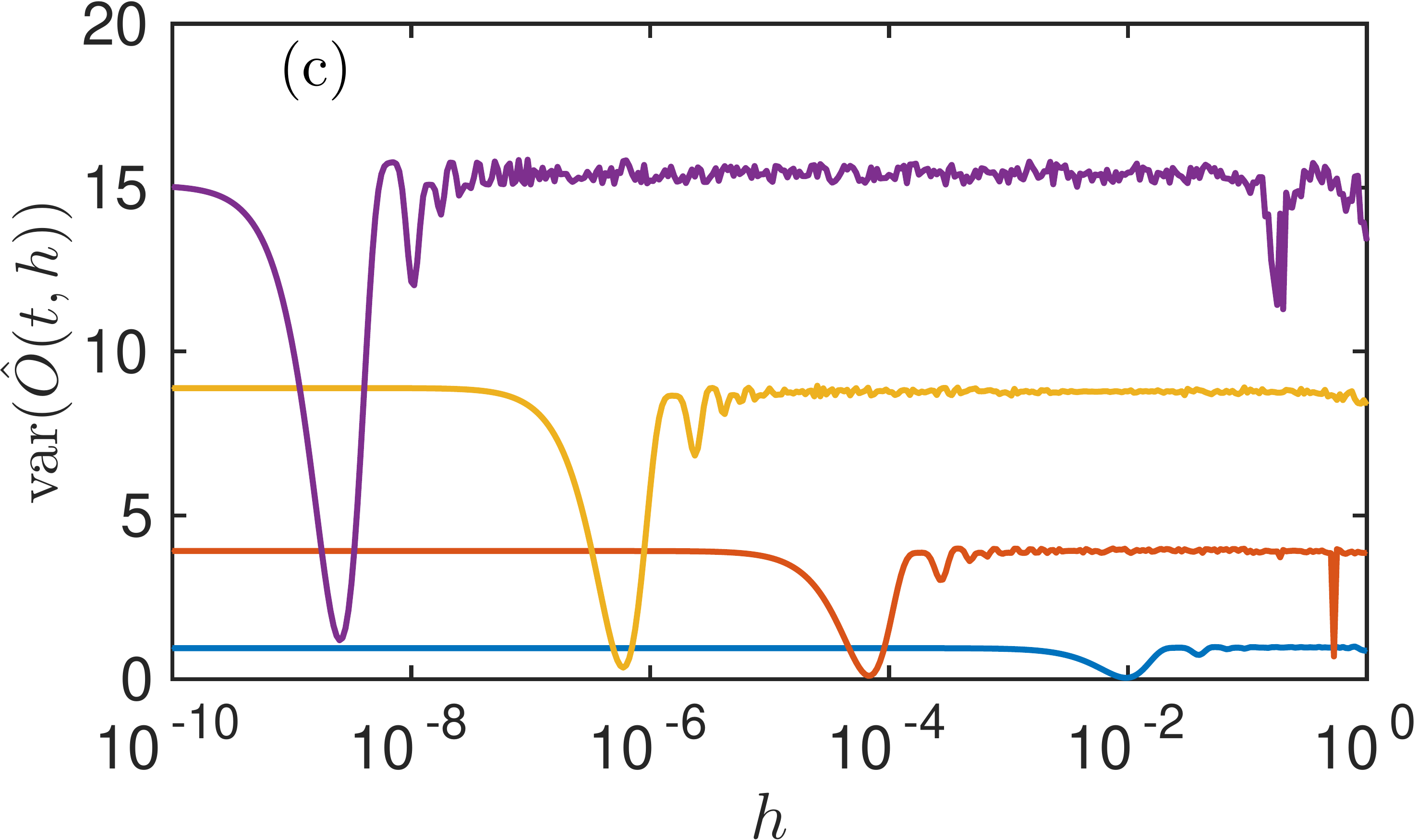}
    \includegraphics[width=0.49\linewidth]{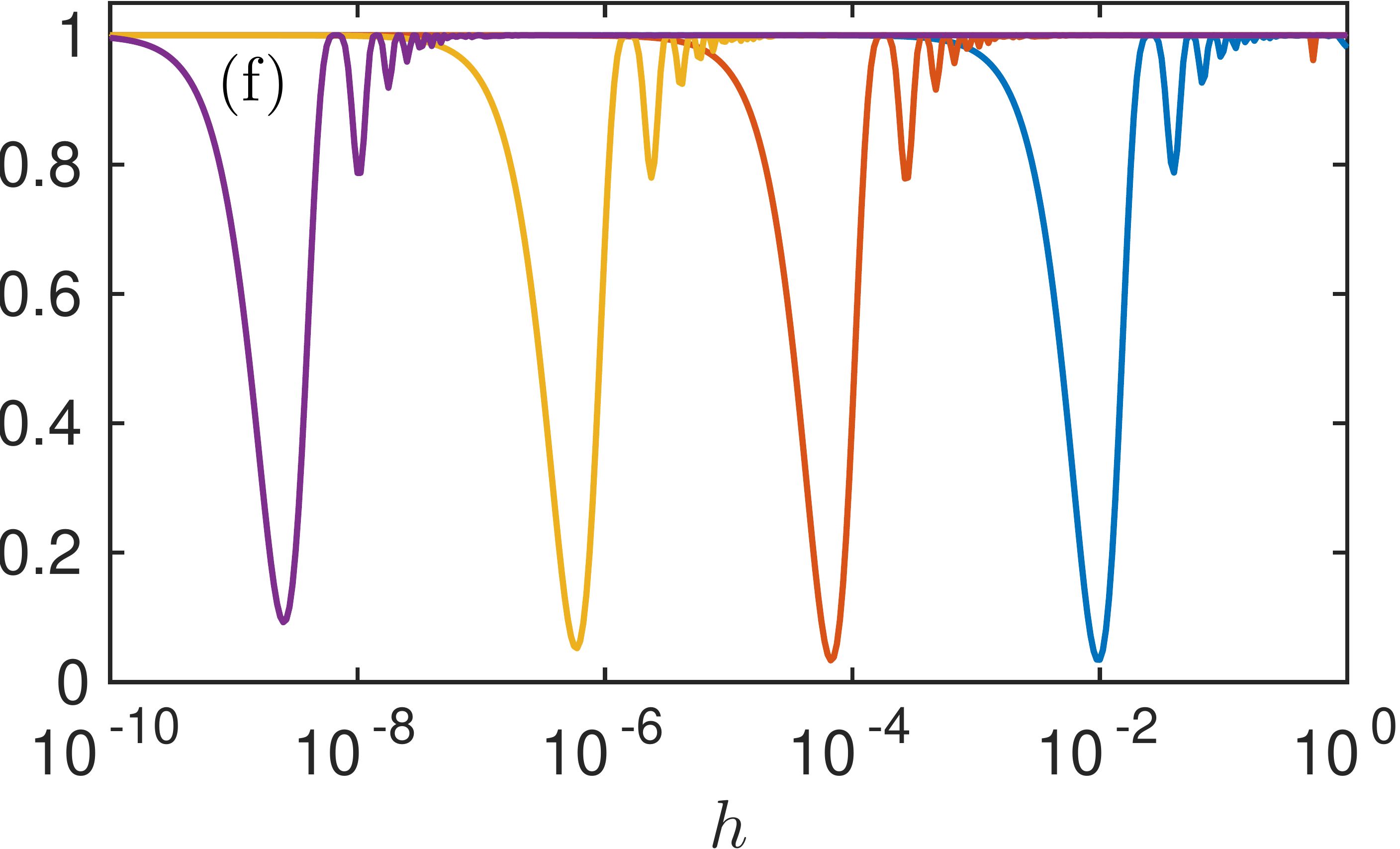}
    \caption{ MoM estimator for varying ac field amplitude $h$.
    We consider the  at a fixed time $\textrm{MoM}(h,t^*)$, with $t^*$ corresponding to its peak in the linear response regime of Fig.~\ref{fig:4}, and use its same initial states and observables.
     Panels (a) and (d) show the MoM sensitivity, exhibiting a plateau of high precision that narrows as $N$ increases. In this regime, both initial states  displays enhanced sensitivity with $N$. Panels (b) and (e) show the  derivative $|\partial_h \langle \hat O \rangle|^2$, while panels (c) and (f) illustrate the variance $\text{var}(\hat O)$.
    }
    \label{fig:5}
\end{figure}

 In summary, these results demonstrate that while the GHZ state could provide immediate, high-order sensitivity in the MoM due to its initial entanglement---as long as one could measure nontrivial observables (dressed parity-magnetization)---the CSS state relies on the system's dynamics to build correlations over time. In both cases, nevertheless, with simpler observables such as the bare parity, or magnetization, the MoM estimator could still show a Heisenberg limit scaling once it is measured on its  optimal measurement window, as given by the FTC lifetime.

\begin{figure*}
    \centering
    \includegraphics[width=0.31\linewidth]{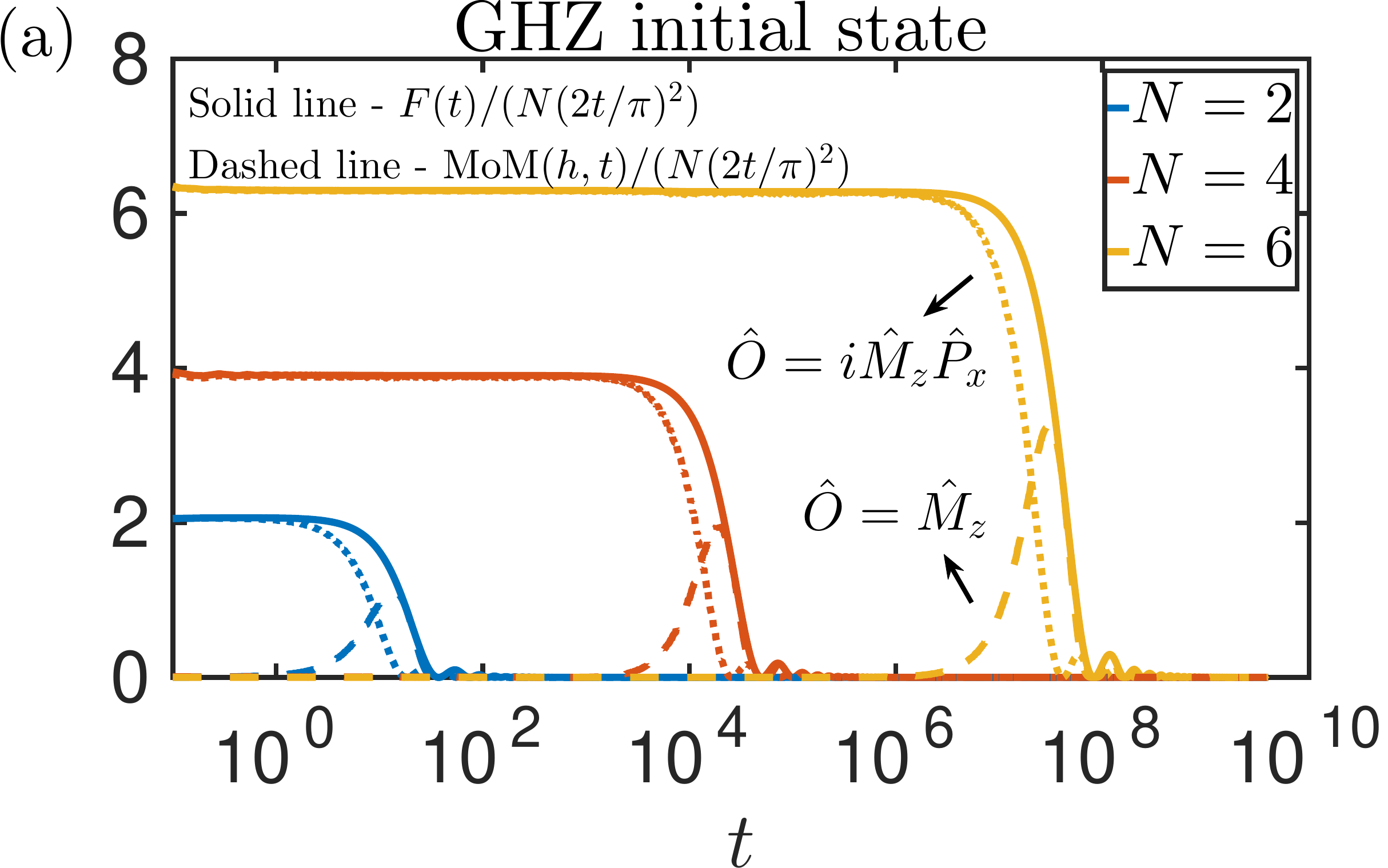}
    \includegraphics[width=0.31\linewidth]{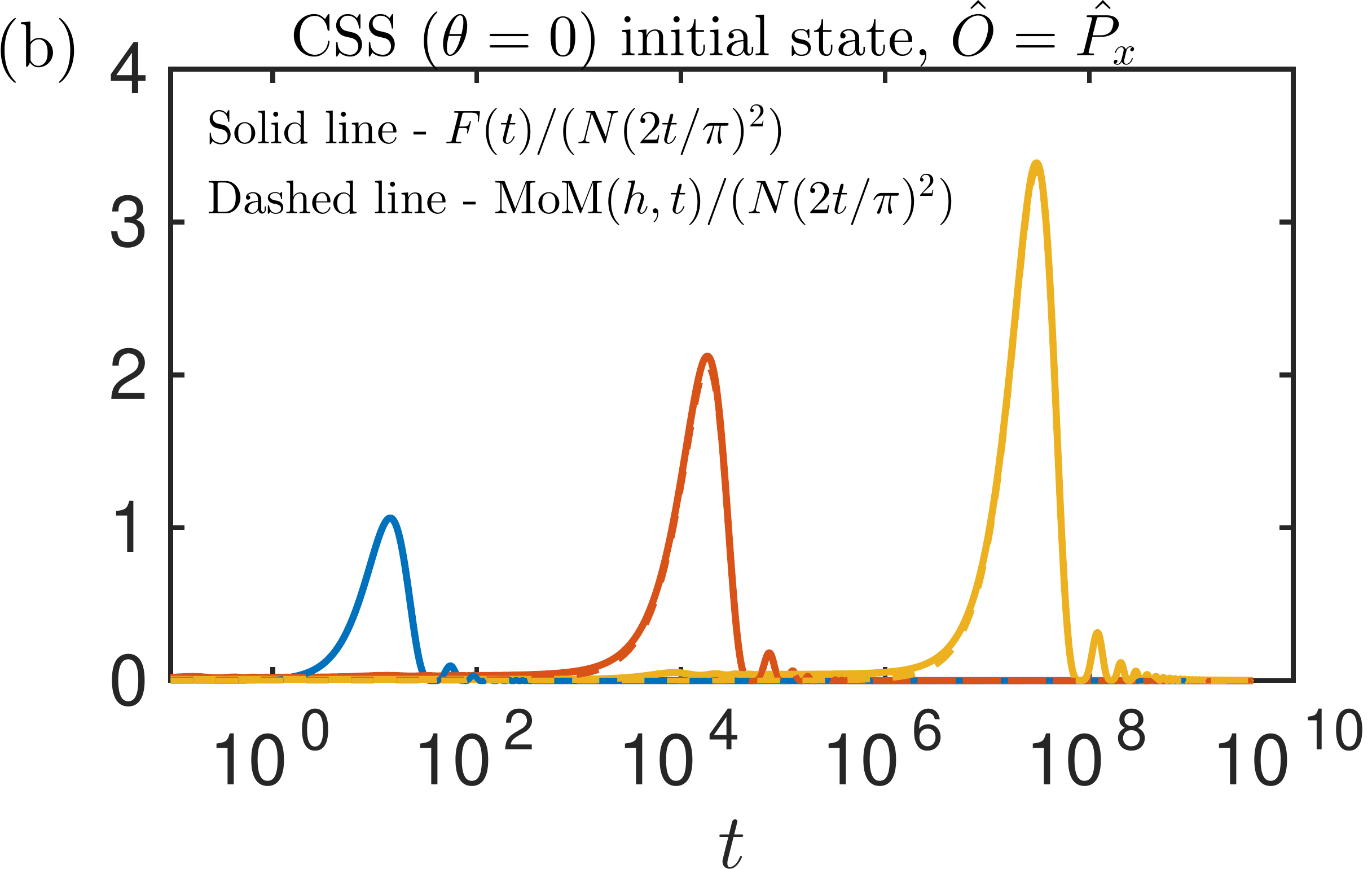}
    \includegraphics[width=0.31\linewidth]{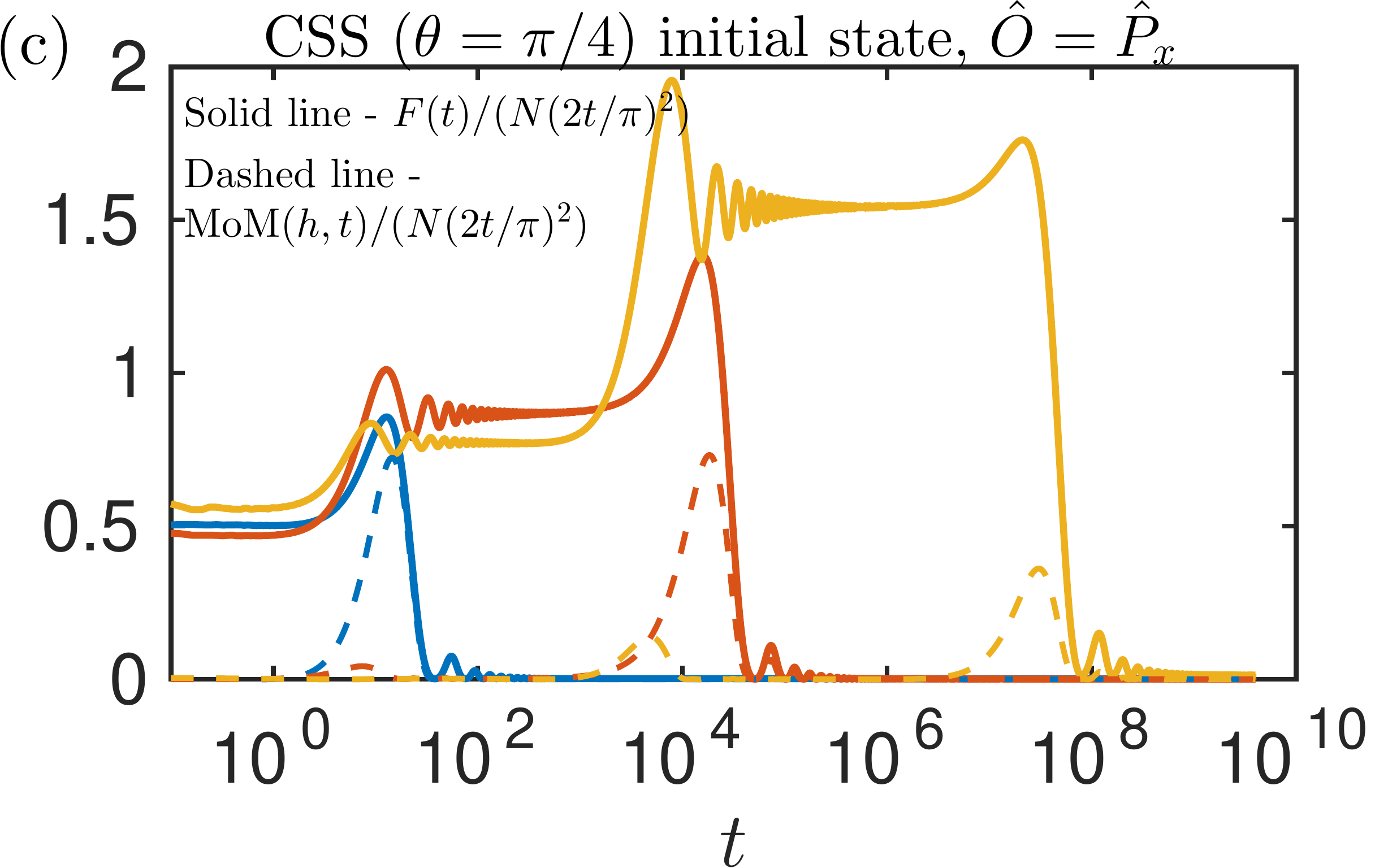}
    \caption{
    Time evolution of the  QFI (solid lines) and MoM (dashed lines)  in the linear response limit $(h \rightarrow 0)$, for  different system sizes $N$ and initial state preparations.
     The simulations use experimentally relevant parameters $JT/\hbar = 17.1$, with a period $T=0.02 \text{s}$ ~\cite{pal2018temporal}.
     We consider (a) a GHZ initial state, (b) a CSS with $\theta =\phi = 0$ and (c) a CSS with $\theta = \pi/4$, $\phi = 0$. Apart from different timescales, the behavior is qualitatively similar to those previously discussed in Fig.~\ref{fig:3}.
    }
    \label{fig:6}
\end{figure*}

\subsection*{Decomposition of MoM}

    The behavior of the MoM estimator observed in Fig.~\ref{fig:3}  is principally determined by the fluctuations of the expectation value of the chosen observables with respect to $h$.
    We show in Fig.~\ref{fig:4} how these fluctuations evolve in time. Specifically, how the dynamic response and noise components of the MoM behave dynamically. As observed in Fig.~\ref{fig:4}-(a),(b), the peaks observed in MoM are related to the high fluctuations of the response term $|\partial_h\langle O \rangle|^2$, while the variance of the observables remains roughly constant along all the dynamics---see Fig.~\ref{fig:4}-(c),(d).
    Notice that, while the GHZ variance scales with the system size $N$ due to many-body correlations, the CSS variance approaches unity, due to the uncorrelated nature of the spins.
    Nevertheless, for the GHZ state, the peak in the response derivatives terms scale by $N^{2+ \alpha} $ with $\alpha > 1$, enabling the Heisenberg limit for the MoM despite its large observable variance.

Complementing the previous analysis of MoM performance, we also examine its dependence on the field amplitude $h$. Fig.~\ref{fig:5} illustrates the evolution of the sensor sensitivity  as $h$ varies within the range $[10^{-12},10^{0}]$. Here we consider the estimator at a fixed time $\textrm{MoM}(h,t^*)$, with $t^*$ corresponding to its peak in the linear response regime of Fig.~\ref{fig:4}.
In both initial preparations, GHZ and CSS, the sensitivity forms a well-defined plateau that increases in height with the size $N$. The operational range of $h$ is constrained by the inverse of the coherence lifetime observed in the QFI within linear response. Panels (b) and (e) show the squared derivative of the observable with respect to $h$; both cases show a plateau region that is the same as panel (a) and (b). This region shows greater fluctuations, with larger changes in GHZ than in CSS. In panel (c), the variance of GHZ increases with the size $N$, and even with this increase in variance, the square of the derivative of the observable fluctuation is significantly large, so that it cancels the growth of the variance and still improves the sensor's accuracy.
These results establish a clear theoretical advantage for entangled initial states. To further validate these findings, it is crucial to examine how this protocol performs within a concrete experimental platform, as discussed in the next section.

\subsection{Experimental set-up, for Acetonitrile molecule}

To observe how the theoretical behaviors of QFI and MoM manifest at the laboratory scale, we perform a simulation using the experimental parameter values for an acetonitrile molecule~\cite{pal2018temporal} in a room-temperature environment. The  initial state preparation, in either a GHZ or the CSS states, is performed through a sequence of radio-frequency (RF) pulses followed by a free evolution over a period $\tau$ equal to the kick period.
The set of experimental parameter values are  $JT/\hbar = 17.1$, with a period $T=0.02 \text{s}$, $\hbar $ is the reduced Planck constant.
The kicking and disorder terms can be chosen within the same range as in the theoretical studies of the previous subsection.

We show our results in Fig.~\ref{fig:6}. We first observe that, under the employed experimental parameters, the qualitative behavior of both the QFI and the MoM is similar to our previous theoretical analysis across all initial states and observables considered. However, an important aspect deserves emphasis: the lifetime of the FTC is significantly increased---most notably for $N = 6$, where it becomes two orders of magnitude larger. This enhancement directly impacts the MoM when measured via bare magnetization or parity observables, as these quantities exhibit an optimal time window that roughly coincides with the FTC lifetime. Recalling the experimental constraint imposed by the molecular relaxation time, approximately $t \sim 10^{1}-10^{2}$~s,  this implies that---except for small molecules ($N \sim 2$)---the optimal time window for the MoM could not be accurately accessed experimentally. Interesting perspectives therefore lie either in shortening the FTC lifetime in such NMR experiments, for instance by introducing additional interactions into the model Hamiltonian so that the MoM becomes feasible at its optimality, or in designing methods to measure nontrivial observables---such as the dressed parity-magnetization---for which the MoM can saturate the QFI throughout the entire time evolution.

\section{Conclusions }
\label{sec.conclusions}

In this work, we demonstrate how to identify and implement near-optimal observables for quantum metrology using the MoM, with a focus on FTC sensors. First, we established a MoM protocol based on the SLD observable, and proved that it saturates the QFI bound. However, the SLD operator is typically non-local and complex, hindering its direct experimental implementation. Therefore, by exploring its structure in an FTC-based ac field sensing model, we demonstrated that it can be accurately approximated by simple observables. Specifically,  by using the simplifications associated with exponentially small gaps in cat-paired FTC subspaces and showing the relation between SLD and HSO operators,  we demonstrated that the SLD reduces to simpler observables in the MoM: a parity operator for a polarized initial state,  a parity or dressed parity-magnetization for GHZ-like states, or simply a bare magnetization for GHZ states once focusing on the FTC lifetime window---as summarized in Table~\ref{tab:summary.observables}. These results establish a practical route toward near-optimal metrology in FTC sensors, where the inaccessible SLD operator can be replaced by simpler observables while retaining quantum-enhanced sensitivity.

Indeed, we explored our theoretical predictions within a specific NMR model exhibiting a FTC phase. Our numerical simulations corroborate the proposed predictions, demonstrating how simple observables within the MoM framework---such as the bare magnetization or parity observables---can saturate the QFI bound for different initial state preparations, thereby circumventing the need for the direct implementation of SLD operator. An important aspect, however, is that in many cases the optimal performance of the MoM is constrained to a specific observation time window proportional to the FTC lifetime, which may pose a challenge for experimental realization. Indeed, using experimentally motivated parameters in our simulations, and a typical initial GHZ or CSS state, we observed that this time window can be considerably longer than the typical relaxation time of conventional NMR molecules. Consequently, strategies to overcome this limitation should be envisioned, either by shortening the FTC lifetime in such platforms, implementing nontrivial observables---such as the dressed parity-magnetization---or atypical GHZ initial states.

\begin{acknowledgments}
 We acknowledge interesting discussions with Alexandre M. Souza and Ivan S. Oliveira. M.M, A.T and F.I. acknowledge financial support from the Brazilian funding agencies CAPES, CNPq (308637/2022-4), FAPERJ (No. E-26/210.236/2024, and No.E-26/204.340/2025), and by the Serrapilheira Institute (grant number Serra 2211-42166).
\end{acknowledgments}

\bibliography{biblio}

\widetext
\clearpage
\begin{center}
	\large \textbf{Supplemental Material} \\ \vspace{0.3cm}
    M. A. Manya, Andrei Tsypilnikov, Fernando Iemini

\end{center}
\setcounter{section}{0}
\setcounter{equation}{0}
\setcounter{figure}{0}
\setcounter{table}{0}
\setcounter{page}{1}
\makeatletter
\renewcommand{\theequation}{S\arabic{equation}}
\renewcommand{\thefigure}{S\arabic{figure}}

In this Supplemental Material, we provide properties of $\hat{\mathcal S_z}$ operators, numerical analysis of matrix elements of LMG model and single spin kick-model derivation for SLD operator.

\section{Properties of signal operator in cat-subspaces}
\label{sec:sz-real}

We establish two properties of the matrix elements of the signal operator $\hat{\mathcal S}_z$ in the Floquet eigenbasis that are used in the SLD expansion of the main text: (i) the diagonal elements vanish, and (ii) the cat-pair off-diagonal element $\mathcal{O}_{i\bar{i}}\equiv\langle E_i|\hat{\mathcal S}_z|E_{\bar{i}}\rangle$ is real.

Let $\hat X$ be the parity (kick) operator and $\hat{\mathcal S}_z$ the order parameter. As in the Ref.~\cite{Andrei2026}, they satisfy
\begin{equation}\label{eq:sz-real-relations}
\hat X^2=\mathbb{I},\qquad [\hat X,\hat H_F]=0,\qquad \{\hat X,\hat{\mathcal S}_z\}=0,
\end{equation}
where $\hat H_F$ is the Floquet Hamiltonian. Since $\hat X$ commutes with $\hat H_F$, the two share an eigenbasis $\{\ket{E_i}\}$,
\begin{equation}\label{eq:sz-real-spectrum}
\hat H_F\ket{E_i}=E_i\ket{E_i},\qquad \hat X\ket{E_i}=p_i\ket{E_i},\qquad p_i=\pm1,
\end{equation}
with $p_i$ the ``parity'' of the $i$'th eigenstate; $\ket{E_{\bar{i}}}$ denotes the quasi-degenerate parity partner of $\ket{E_i}$, so that $p_{\bar{i}}=-p_i$. The anticommutator in Eq.~\eqref{eq:sz-real-relations} together with $\hat X^2=\mathbb{I}$ gives the operator identity
\begin{equation}\label{eq:sz-conj}
\hat X\hat{\mathcal S}_z\hat X=-\hat{\mathcal S}_z .
\end{equation}

\begin{prop}[Vanishing diagonal and parity selection rule]
\label{prop:diag}
The signal connects only states of opposite parity for cat states; in particular its diagonal elements vanish,
\begin{equation}\label{eq:sz-diag}
\langle E_i|\hat{\mathcal S}_z|E_j\rangle=0 \quad\text{unless}\quad p_i p_j=-1,
\qquad\text{hence}\qquad
\langle E_i|\hat{\mathcal S}_z|E_i\rangle=0 .
\end{equation}
\end{prop}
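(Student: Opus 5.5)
The plan is to sandwich the conjugation identity \eqref{eq:sz-conj} between Floquet eigenstates and use the spectral relations \eqref{eq:sz-real-spectrum}. Since $\hat X$ is Hermitian and unitary ($\hat X^2=\mathbb{I}$ with real eigenvalues $p_i=\pm1$), we have $\bra{E_i}\hat X = p_i\bra{E_i}$. First I would check that \eqref{eq:sz-conj} indeed follows from the relations \eqref{eq:sz-real-relations}. Multiplying $\hat X\hat{\mathcal S}_z + \hat{\mathcal S}_z\hat X = 0$ on the right by $\hat X$ and using $\hat X^2=\mathbb{I}$ gives $\hat X\hat{\mathcal S}_z\hat X = -\hat{\mathcal S}_z$.

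Next I would take matrix elements of both sides between $\bra{E_i}$ and $\ket{E_j}$:
\begin{equation*}
p_i p_j\,\langle E_i|\hat{\mathcal S}_z|E_j\rangle = \langle E_i|\hat X\hat{\mathcal S}_z\hat X|E_j\rangle = -\langle E_i|\hat{\mathcal S}_z|E_j\rangle .
\end{equation*}
This rearranges to $(1+p_ip_j)\,\mathcal{O}_{ij}=0$. Because $p_ip_j\in\{\pm1\}$, the prefactor equals $2$ whenever $p_ip_j=+1$, which forces $\mathcal{O}_{ij}=0$. A nonzero element is therefore only possible when $p_ip_j=-1$, which is the selection rule. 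Setting $j=i$ gives $p_i^2=1$, so every diagonal element vanishes.

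The only point needing care, and not really an obstacle, is the possibility of degeneracies in $\hat H_F$. In that case the eigenbasis $\{\ket{E_i}\}$ must be chosen as a common eigenbasis of $\hat H_F$ and $\hat X$, which is possible because the two operators commute. This is exactly what \eqref{eq:sz-real-spectrum} assumes, so the argument only uses that each $\ket{E_i}$ is an $\hat X$-eigenvector and needs nothing about $\hat H_F$. The phrase ``for cat states'' then follows immediately: within a cat pair the parities are opposite, $p_{\bar i}=-p_i$, so $\mathcal{O}_{i\bar i}$ is allowed to be nonzero.
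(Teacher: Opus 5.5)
Your proof is correct and follows the paper's argument. You sandwich $\hat X\hat{\mathcal S}_z\hat X=-\hat{\mathcal S}_z$ between $\langle E_i|$ and $|E_j\rangle$ to obtain $(1+p_ip_j)\langle E_i|\hat{\mathcal S}_z|E_j\rangle=0$, and then set $j=i$. Your two extra checks are details the paper leaves implicit: you derive the conjugation identity from the anticommutator, and you note that $\langle E_i|\hat X=p_i\langle E_i|$ relies on $\hat X^\dagger=\hat X$, which holds for a unitary involution.
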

\begin{proof}
Inserting Eq.~\eqref{eq:sz-conj} between eigenstates and using $\hat X\ket{E_i}=p_i\ket{E_i}$,
\begin{equation}
\langle E_i|\hat{\mathcal S}_z|E_j\rangle
=-\langle E_i|\hat X\hat{\mathcal S}_z\hat X|E_j\rangle
=-p_i p_j\,\langle E_i|\hat{\mathcal S}_z|E_j\rangle .
\end{equation}
For $p_i p_j=+1$ this forces the element to vanish. The diagonal is the case $i=j$ with $p_i^2=1$, so $\langle E_i|\hat{\mathcal S}_z|E_i\rangle=0$.
\end{proof}

\begin{prop}[Vanish of imaginary part]
\label{prop:real}
The off-diagonal element of cat states within a quasi-degenerate doublet is real, $\textrm{Im}(\mathcal{O}_{i\bar{i}})=0$.
\end{prop}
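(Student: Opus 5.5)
The plan is to treat $\mathrm{Im}(\mathcal{O}_{i\bar i})=0$ as a statement about the phase convention of the Floquet eigenbasis. I will show that a convention making it true always exists and is consistent with every relation used in the main text. The starting point is that Eqs.~\eqref{eq:sz-real-relations}--\eqref{eq:sz-real-spectrum} fix each $\ket{E_i}$ only up to a phase. For a finite system the doublet $\{\ket{E_i},\ket{E_{\bar i}}\}$ is only quasi-degenerate: $\Delta_{i\bar i}\neq 0$, and the two states carry different parities $p_{\bar i}=-p_i$. Each state is therefore a non-degenerate joint eigenvector of $(\hat H_F,\hat X)$, so the only residual freedom is $\ket{E_j}\to e^{i\alpha_j}\ket{E_j}$. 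Under this rephasing, $\mathcal{O}_{i\bar i}\to e^{i(\alpha_{\bar i}-\alpha_i)}\mathcal{O}_{i\bar i}$.

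The first step is to use Proposition~\ref{prop:diag} to confirm that $\mathcal{O}_{i\bar i}$ is the only nonzero element of $\hat{\mathcal S}_z$ inside the doublet, since the diagonal elements vanish. Hermiticity then gives $\mathcal{O}_{\bar i i}=\mathcal{O}_{i\bar i}^*$. Next I choose $\alpha_{\bar i}-\alpha_i=-\arg\mathcal{O}_{i\bar i}$ for each pair. This makes $\mathcal{O}_{i\bar i}=|\mathcal{O}_{i\bar i}|\in\mathbb{R}$, and hence $\mathcal{O}_{\bar i i}$ is real as well. Because distinct doublets are rephased independently, this can be done for all $M$ pairs at once.

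The second step is to check that nothing else in the main text is spoiled by this choice. The eigenvalue relations in Eq.~\eqref{eq:hf-floquet-schrodinger} are phase-invariant. The response terms $R_{ij}$ depend only on the energies. The only other place the relative phase enters is the initial state in Eq.~\eqref{eq:init-psi}, where it is absorbed into $\varphi_i\to\varphi_i+\arg\mathcal{O}_{i\bar i}$. Hence the results stated for a given $(\theta_i,\varphi_i)$ hold with $\varphi_i$ measured in this real gauge.

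As a gauge-independent cross-check, I would also give a symmetry argument for models like the LMG FTC. In the $\hat\sigma^z$ basis, $\hat{\mathcal S}_z$ is real, and $\hat X$ is real up to a global phase when $\phi=\pi$. One then needs an antiunitary operator $K$ (complex conjugation, possibly composed with a unitary) that commutes with $\hat H_F$ and $\hat X$. Given such a $K$, the non-degenerate joint eigenvectors can be taken $K$-invariant, so the matrix elements of the $K$-real operator $\hat{\mathcal S}_z$ are real. The main obstacle is precisely establishing that $\hat H_F$, defined through the logarithm of the Floquet unitary, admits such an antiunitary symmetry; this generally requires choosing a symmetric time frame for the drive. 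For this reason I would rest the proof on the gauge argument and present the symmetry argument only as a consistency check.
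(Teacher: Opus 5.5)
Your argument is correct, but it takes a different route from the paper's.

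\textbf{What the paper does.} The paper never mentions the phase freedom. It writes the doublet as $\ket{E_i},\ket{E_{\bar i}}=(|\Uparrow_i\rangle\pm|\Downarrow_i\rangle)/\sqrt2$ with $|\Downarrow_i\rangle\equiv\hat X|\Uparrow_i\rangle$. It then assumes $|\Uparrow_i\rangle$ is an exact eigenstate of $\hat{\mathcal S}_z$ with eigenvalue $s_i$, and uses the anticommutator to get $\hat{\mathcal S}_z|\Downarrow_i\rangle=-s_i|\Downarrow_i\rangle$. From this it reads off $\hat{\mathcal S}_z\ket{E_{\bar i}}=s_i\ket{E_i}$, i.e.\ $\mathcal O_{i\bar i}=s_i\in\mathbb R$.

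\textbf{How this relates to your approach.} The paper's decomposition silently fixes exactly the relative phase you make explicit. It is also exact only for exactly polarized references. Without the eigenstate assumption, the same gauge gives in general $\mathrm{Im}\,\mathcal O_{i\bar i}=i\langle\Uparrow_i|\hat{\mathcal S}_z|\Downarrow_i\rangle$, an inter-domain matrix element. That is why the paper adds that finite-size doublets obey the identity only up to corrections that vanish in the thermodynamic limit. Your gauge argument makes the statement exact and model independent, and it correctly records the price: the shift $\varphi_i\to\varphi_i+\arg\mathcal O_{i\bar i}$.

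\textbf{What the paper's route buys.} It supplies physical content that yours does not. First, $\mathcal O_{i\bar i}=s_i$ is the macroscopic magnetization; the main text relies on this when it says the parity overlap scales linearly with $N$. Second, it ties $\theta_i=\pi/2$, $\varphi_i=0$ to a polarized state.

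The second point is recoverable in your framework, and you should say so explicitly. With $\mathcal O_{i\bar i}$ real and the diagonal vanishing by Proposition~\ref{prop:diag}, the doublet-projected $\hat{\mathcal S}_z$ has eigenvectors exactly at $\theta_i=\pi/2$, $\varphi_i\in\{0,\pi\}$. Hence the labels in Table~\ref{tab:summary.observables} keep their meaning after your regauging.

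\textbf{On the cross-check.} The antiunitary-symmetry argument is unnecessary, and calling it gauge independent is imprecise. The phase of $\mathcal O_{i\bar i}$ is never gauge invariant, so for the in-doublet element such a symmetry only selects a canonical real gauge.
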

\begin{proof}
In the symmetry-broken phase the doublet decomposes into polarized states $|\Uparrow_i\rangle$ and $|\Downarrow_i\rangle\equiv\hat X|\Uparrow_i\rangle$, with $\langle\Uparrow_i|\Downarrow_i\rangle=0$,
\begin{equation}\label{eq:sz-real-cat}
\ket{E_i}=\frac{|\Uparrow_i\rangle+|\Downarrow_i\rangle}{\sqrt2},\qquad
\ket{E_{\bar{i}}}=\frac{|\Uparrow_i\rangle-|\Downarrow_i\rangle}{\sqrt2},
\end{equation}
where $|\Uparrow_i\rangle$ carries a definite macroscopic magnetization $\hat{\mathcal S}_z|\Uparrow_i\rangle=s_i|\Uparrow_i\rangle$, $s_i\in\mathbb{R}$. The anticommutation fixes $\hat{\mathcal S}_z|\Downarrow_i\rangle=\hat{\mathcal S}_z\hat X|\Uparrow_i\rangle=-\hat X\hat{\mathcal S}_z|\Uparrow_i\rangle=-s_i|\Downarrow_i\rangle$, so $|\Uparrow_i\rangle$ and $|\Downarrow_i\rangle$ are eigenstates of $\hat{\mathcal S}_z$ with opposite eigenvalues $\pm s_i$. Acting on the partner state,
\begin{equation}\label{eq:sz-real-swap}
\hat{\mathcal S}_z\ket{E_{\bar{i}}}=\frac{1}{\sqrt2}\bigl(s_i|\Uparrow_i\rangle- (-s_i)|\Downarrow_i\rangle\bigr)=s_i\ket{E_i},
\end{equation}
and therefore
\begin{equation}\label{eq:sz-real-main}
\mathcal{O}_{i\bar{i}}=\langle E_i|\hat{\mathcal S}_z|E_{\bar{i}}\rangle=s_i\in\mathbb{R},
\qquad
\textrm{Im}(\mathcal{O}_{i\bar{i}})=0.
\end{equation}
\end{proof}

The cat-pair element is thus real and equal to the macroscopic magnetization $s_i$, and Eq.~\eqref{eq:sz-real-swap} shows that $\hat{\mathcal S}_z$ flips the parity of the cat state while preserving the doublet, the hallmark of the time-crystal response. The identity is exact for fully polarized references such as $|\Uparrow_i\rangle=\ket{\uparrow\uparrow\cdots\uparrow}$, and for finite-size doublets it holds up to corrections that vanish in the thermodynamic limit.

\section{Analysis of observables in Hamiltonian basis for LMG model}

From the semiclassical approach for LMG model~\cite{Andrei2026} we know operators can be presented as
\begin{equation}\label{eq:sm-average-magnetization}
	\begin{aligned}
		\hat{\mathcal S}_z = \sqrt{1 - \frac{B^2}{J^2}}\, \hat{\mathcal S}_z' - \frac{B}{J}\, \hat{\mathcal S}_x'.
	\end{aligned}
\end{equation}
where
\begin{equation}
	\begin{aligned}
		\hat{\mathcal S}_x' & = \frac{\sqrt{N}}{2} \left(1 + 2 (B/J)^2 \right)^{-1/4} \left( \hat b + \hat b^\dagger \right) + O\left(N^{-3/2}\right),
		\\
		\hat{\mathcal S}_z' & = \frac{N}{2} + \frac{1}{2} \left(1 - \frac{1 + (B/J)^2}{\sqrt{1 + 2 (B/J)^2}}  \right) - \frac{1 + (B/J)^2}{\sqrt{1 + 2 (B/J)^2}} b^\dagger b + \frac{1}{2} \frac{(B/J)^2}{\sqrt{1 + 2 (B/J)^2}} \left(b^\dagger b^\dagger + b b \right)
	\end{aligned}
\end{equation}

so, we can see the following non-diagonal elements in eigenvalues basis,
\begin{equation}
	\begin{aligned}
        &\langle E_n | \hat{\mathcal S}_z |E_{\bar{n}} \rangle = \langle \Uparrow_n | \hat{\mathcal S}_z |\Uparrow_n \rangle = O(N), \\
		&\langle E_n | \hat{\mathcal S}_z |E_{\bar{n}\pm1} \rangle = \frac{1}{2}\left(1 + p_n p_{n\pm1} \right) \langle \Uparrow_n | \hat{\mathcal S}_z |\Uparrow_{n\pm1} \rangle  = O\left(\sqrt{N}\right), \\
        &\langle E_n | \hat{\mathcal S}_z |E_{\bar{k}} \rangle = O(1), \quad |n - k| \ge 2.
    \end{aligned}
\end{equation}

It can be confirmed numerically by computing the matrix elements $|\bra{E_i}\hat O\ket{E_j}|$ for $\hat O \in \{\hat X, \hat{\mathcal S}_x, \hat{\mathcal S}_y, \hat{\mathcal S}_z\}$ and inspecting the heatmaps shown in Fig.~\ref{fig:obs-eigenbasis} for  $N=40$, at fixed $B/J=0.4$.

\begin{figure*}[h]
    \centering
    \includegraphics[width=\linewidth]{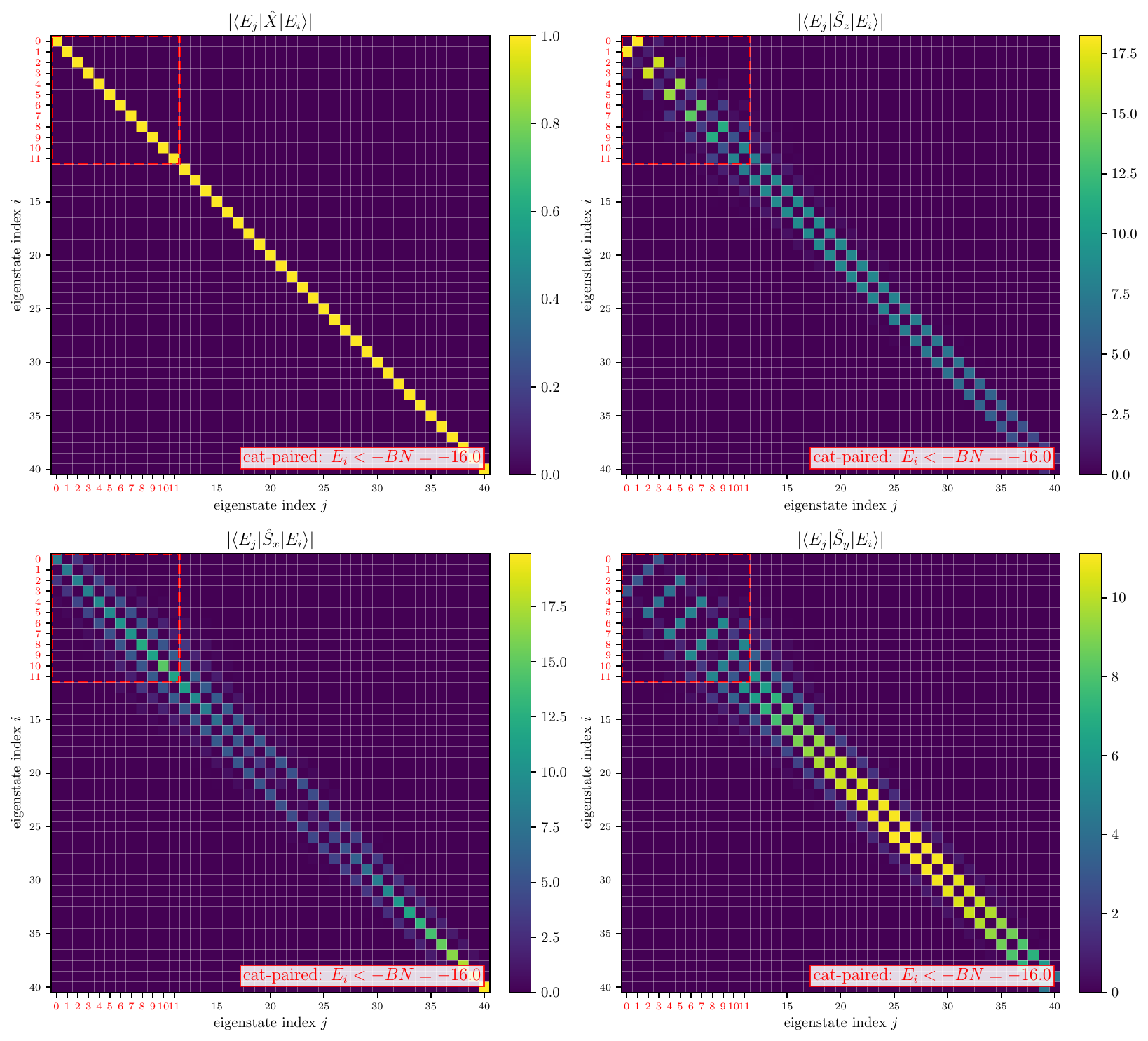}
    \caption{Modulus of the matrix elements $|\langle E_j|\hat O|E_i\rangle|$ in the eigenbasis of pure LMG model~\cite{Andrei2026} $\hat H_0 = -(2J/N)\hat{\mathcal S}_z^2 - 2B\hat{\mathcal S}_x$, for $\hat O \in \{\hat X, \hat{\mathcal S}_z, \hat{\mathcal S}_x, \hat{\mathcal S}_y\}$, where $\hat X = e^{-i\pi \hat{\mathcal S}_x}$ is the parity operator. The dashed red box encloses the cat-paired subspace of states with $E_i < -BN$, inside which nearly degenerate pairs of opposite parity are coupled by the kick generator $\hat{\mathcal S}_x$. Indices $(j,i)$ are zero-based and ordered by ascending $\hat H_0$ eigenvalue. Calculations done for LMG model with $N=40$, at fixed $B/J=0.4$}
    \label{fig:obs-eigenbasis}
\end{figure*}

Symmetries fix most of the diagonal: $[\hat H_F,\hat X]=0$ implies $\langle E_i|\hat X|E_i\rangle=\pm 1$, and since $\{\hat{\mathcal S}_z,\hat X\}=\{\hat{\mathcal S}_y,\hat X\}=0$ the diagonals of $\hat{\mathcal S}_z$ and $\hat{\mathcal S}_y$ vanish in any nondegenerate parity sector. Only $\hat{\mathcal S}_x$ has an allowed nonzero diagonal, which grows as $O(N)$.

\section{SLD and single spin model}

    SLD operator for pure state defined as
    \begin{equation}
        \label{eq:sld-definition-pure-states-SM}
        \hat{L}_h = 2\Big(\ket{\partial_h \psi_h} \bra{\psi_h} + \ket{\psi_h} \bra{\partial_h \psi_h}\Big)
    \end{equation}
    can be computed analytically for single spin toy model with Hamiltonian
    \begin{eqnarray}
        \hat{H}_{\text{ss}}(h) =  -\frac{\Delta}{2} \hat{\sigma}^x + h \theta(t) \hat{\sigma}^z - \frac{\phi}{2} \sum_{n=1}^{\infty} \delta(t - nT) \hat{\sigma}^x,
    \end{eqnarray}
    with $\Delta>0$ the bare gap and $\phi=\pi$ the kick angle. The ac signal is given by a step function in period-doubling resonance to the kicking, i.e. $\theta(t+kT) = (-1)^k \theta(t)$ with $|\theta(t)|=1$ for all $t$.

    The eigenvectors of $\hat{H}_{\text{ss}}|_{t=0,h=0}=-(\Delta/2)\hat\sigma^x$, written in the computational basis $\hat{\sigma}^z$, are the $\hat\sigma^x$ eigenstates
    \begin{equation}
        \begin{aligned}
        \ket{E_+} &= \frac{1}{\sqrt{2}}
        \begin{pmatrix}
            1 \\ 1
        \end{pmatrix},  \quad &E_{+}=-\frac{\Delta}{2},  \\
        \ket{E_-}  &= \frac{1}{\sqrt{2}}
        \begin{pmatrix}
            1 \\ -1
        \end{pmatrix}, \quad &E_{-}=\frac{\Delta}{2},
       \end{aligned}
    \end{equation}
    so that $\hat\sigma^x\ket{E_\pm}=\pm\ket{E_\pm}$ and $\hat\sigma^z\ket{E_\pm}=-\ket{E_\mp}$, hence $\ket{E_+}$ is the lower-energy eigenstate of the bare Hamiltonian.

    \subsection{Evolution of single spin model}

    Unitary evolution $\hat {U} (t_1;t_2) \equiv e^{-i \hat H_{\text{ss}}(t_1-t_2)}$ using the Floquet theorem, during the $n$-th Floquet period is given by
    \begin{equation}
        \hat {U} \left((n+1)T ; nT\right) = \hat {U}_{F}(h) = \hat{X}\, e^{-i \hat{H}_0((-1)^n h)\, T },
        \qquad
        \hat{H}_0(h) = -\tfrac{\Delta}{2}\hat{\sigma}^x + h\hat{\sigma}^z,
    \end{equation}
    where the ``kick operator'' is $\hat X = e^{i \frac{\phi}{2} \hat{\sigma}^x} =  i \hat{\sigma}^x$ for $\phi=\pi$, and $\Delta$ is the bare gap.

    \subsubsection{Single spin case SLD operator representation in Pauli basis}
    \textbf{Proof} that for $n \geq 1$,
    \[
         \hat {U} \left(nT ; 0\right) = \hat{X} \hat{U}_{n-1}\, \hat{X} \hat{U}_{n-2}\, \cdots\, \hat{X} \hat{U}_1\, \hat{X} \hat{U}_0 \;=\; \hat{X}^{n}\, \hat{U}_+^{\,n},
        \qquad \hat{U}_k \equiv \hat{U}_F((-1)^k h).
    \]
    \begin{proof}
        Using the following property $U_k\, X = X\, U_{k+1}$. \quad $(\star)$

        \textbf{Base ($n=1$).} $X U_0 = X U_+ = X^1\, U_+^{\,1}$. \checkmark

        \textbf{Inductive step ($n-1 \to n$).} Assume
        \[
            X U_{n-2}\, X U_{n-3}\, \cdots\, X U_0 \;=\; X^{\,n-1}\, U_+^{\,n-1}. \tag{IH}
        \]
        Split off the leftmost block:
        \[
            X U_{n-1}\, X U_{n-2}\, \cdots\, X U_0
            \;\stackrel{\text{(IH)}}{=}\; X U_{n-1}\, \cdot\, X^{\,n-1}\, U_+^{\,n-1}.
        \]
        Apply $(\star)$ a total of $n-1$ times to slide $U_{n-1}$ through the block $X^{\,n-1}$:
        \[
            U_{n-1}\, X^{\,n-1} \;=\; X^{\,n-1}\, U_{2(n-1)} \;=\; X^{\,n-1}\, U_+,
        \]
        since $2(n-1)$ is even. Therefore
        \[
            X U_{n-1}\, X^{\,n-1}\, U_+^{\,n-1}
            \;=\; X \cdot X^{\,n-1}\, U_+ \cdot U_+^{\,n-1}
            \;=\; X^{n}\, U_+^{\,n}.
        \]
    \end{proof}

    Collecting the kicks through the property above, the full Floquet evolution after $n$ steps is
    \begin{equation}\label{eq:Un}
        \hat{U}_{h\to0}(nT) = \hat{X}^n\, e^{-i n T \hat{H}_0},\qquad \hat{H}_0\equiv\hat{H}_0(h).
    \end{equation}
    The single-period evolution matrix reads
    \begin{eqnarray*}
        \begin{aligned}
            e^{-i t \hat{H}_0} = \cos\!\left(\tfrac{t}{2}\Omega_h\right)\hat{\mathbb{I}} - 2i\frac{h \sin\!\left(\tfrac{t}{2}\Omega_h\right)}{\Omega_h}\hat{\sigma}^z + i\frac{ \Delta \sin\!\left(\tfrac{t}{2}\Omega_h\right)}{\Omega_h}\hat{\sigma}^x,
        \end{aligned}
    \end{eqnarray*}
    The stroboscopic evolution then has the form
    \begin{eqnarray}
        \begin{aligned}
            \hat{U}_{h\to0}(nT) &= \hat{X}^n \left(e^{-i T \hat{H}_0}\right)^n = \hat{X}^n e^{-i n T \hat{H}_0} \\
            &= (i\hat{\sigma}^x)^n \left( \cos\!\left(\tfrac{\Delta}{2} n T\right)\hat{\mathbb{I}} + i \sin\!\left(\tfrac{\Delta}{2} n T\right)\hat{\sigma}^x - i \frac{2h \sin\!\left(\tfrac{\Delta}{2} n T\right)}{\Delta}\hat{\sigma}^z \right) +O(h^2),
        \end{aligned}
    \end{eqnarray}
    where $\Omega_h = \sqrt{\Delta^2 + 4h^2}$ and in the last line we used $\hat{X}^n = i^n (\hat{\sigma}^x)^n$ and expanded to linear order in $h$. From this form we read off, in the computational basis, the zeroth-order evolution,
    \begin{eqnarray*}
        \begin{aligned}
            \ket{\psi_h(t)}\big|_{h=0}&=\hat{U}_0(nT)\ket{\psi_0} = i^n (\hat{\sigma}^x)^n \left( \cos\!\left(\tfrac{\Delta}{2} nT\right)\hat{\mathbb{I}} +i \sin\!\left(\tfrac{\Delta}{2} nT\right)\hat{\sigma}^x \right)\ket{\psi_0}, \\
            \ket{\partial_h\psi_h(t)}\big|_{h=0}&=\partial_h\left(\hat{U}_0(nT)\right)\ket{\psi_0}\Big|_{h=0} = -2i^{n+1}\frac{\sin\!\left(\tfrac{\Delta}{2} n T\right)}{\Delta} (\hat{\sigma}^x)^n \hat{\sigma}^z \ket{\psi_0}.
        \end{aligned}
    \end{eqnarray*}
    Using Eq.~\eqref{eq:sld-definition-pure-states-SM}, one finds for the initial state
    $\ket{\psi_0(0)} = \cos{\tfrac{\theta}{2}} \ket{E_+} + e^{i \varphi}\sin{\tfrac{\theta}{2}}  \ket{E_-}$,

    \begin{equation}
        \begin{aligned}
            \hat L_{h\to0}(nT) =  \frac{2\sin\!\left(\Delta nT / 2\right)}{\Delta / 2}
            \Bigg(&-\sin\!\left( \Delta nT / 2 - \varphi\right)  \sin\theta \,\hat{\sigma}^x \\
            &+ (-1)^n \cos(\Delta nT /2)\cos\theta \,\hat{\sigma}^y   \\
            &- (-1)^n \sin(\Delta nT /2)\cos\theta \,\hat{\sigma}^z \Bigg).
        \end{aligned}
    \end{equation}
    Rewriting this operator in $\ket{E_-},\ket{E_+}$ basis we can get the same expressions for SLD in single cat-subspace as in main text with energy gap $E_+-E_-=-\Delta$.

    Using expansion
    \begin{equation}
        \hat{L}_{h\to0}(nT) = \vec{L}(nT)\cdot\vec{\sigma},
        \quad
        \vec{\sigma} = (\hat{\sigma}^x, \hat{\sigma}^y, \hat{\sigma}^z),
    \end{equation}
    with SLD projected onto the Pauli basis through $L_\alpha = \tfrac{1}{2}\Tr[\hat{L}\hat{\sigma}^\alpha]$ we can get the expression for dynamics of SLD on the Bloch sphere.

    These results show that for the polarized state ($\theta=\pi/2$, $\varphi=0$), the optimal observable is aligned with $i \hat{\sigma}^x$ (parity operator), while for the cat state ($\theta=0$) the dynamics exhibits a period-doubled rotation in the $y$--$z$ plane.

\end{document}